  \providecommand\BibTeX{{%
    \normalfont B\kern-0.5em{\scshape i\kern-0.25em b}\kern-0.8em\TeX}}}
\newcommand{\dfn}{\coloneqq}
\newcommand{\KM}{\ensuremath{\mathfrak{K}}}
\newcommand{\ddfn}{\Coloneqq}
\newcommand{\traces}{\mathrm{Traces}}
\newcommand{\LL}{\protect\ensuremath{\mathrm{L}}\xspace}
\newcommand{\LTL}{\protect\ensuremath{\mathrm{LTL}}\xspace}
\newcommand{\CTL}{\protect\ensuremath{\mathrm{CTL}}\xspace}
\newcommand{\CTLS}{\protect\ensuremath{\mathrm{CTL}^*}\xspace}
\newcommand{\teamltl}{\protect\ensuremath{\mathrm{Team\LTL}}\xspace}
\newcommand{\exteamltl}{\protect\ensuremath{\exists\teamltl}}
\newcommand{\ateamltl}{\protect\ensuremath{\forall\teamltl}}
\newcommand{\teamctl}{\protect\ensuremath{\mathrm{Team}\CTL}}
\newcommand{\exteamctl}{\protect\ensuremath{\exists\teamctl}}
\newcommand{\ateamctl}{\protect\ensuremath{\forall\teamctl}}
\newcommand{\teamctls}{\protect\ensuremath{\mathrm{Team}\CTLS}}
\newcommand{\exteamctls}{\protect\ensuremath{\exists\teamctls}}
\newcommand{\ateamctls}{\protect\ensuremath{\forall\teamctls}}
\newcommand{\Sub}{\mathrm{Sub}}
\newcommand{\SP}{\mathrm{SP}}
\newcommand{\cl}{\mathrm{cl}}
\newcommand{\tef}{tef\xspace}
\newcommand{\tefs}{tefs\xspace}
\newcommand{\Con}{\mathrm{Con}}
\newcommand{\Um}{\mathsf{U}}
\newcommand{\U}{\mathsf{U}}
\newcommand{\X}{\mathsf{X}}
\newcommand{\F}{\mathsf{F}}
\newcommand{\Gm}{\mathsf{G}}
\newcommand{\W}{\mathsf{W}}
\DeclareMathOperator{\op}{\mathsf{op}}
\newcommand{\ap}{\mathsf{AP}}
\newcommand{\set}[1]{\left\{#1\right\}}
\newcommand{\setprop}[2]{\left\{\,#1\;\middle|\;#2\,\right\}}
\newcommand{\eval}[1]{\left\llbracket #1\right\rrbracket}
\newcommand{\bor}{\varovee}
\DeclareMathOperator{\bneg}{\sim}
\newcommand{\dep}{\mathsf{dep}}
\newcommand{\NE}{\mathrm{NE}}
\newcommand{\oset}[3][0ex]{%
  \mathrel{\mathop{#3}\limits^{
    \vbox to#1{\kern-2\ex@
    \hbox{$\scriptstyle#2$}\vss}}}}
\newcommand{\flatop}{\oset{\scriptscriptstyle1}{\mathsf{A}}}
\newcommand{\fsSAT}{\mathrm{fs\text-SAT}($TE$)}
\newcommand{\fsMC}{\mathrm{fs\text-MC}($TE$)}
\newcommand{\PC}{\mathrm{PC}($TE$)}
\newcommand{\N}{\mathbb{N}}
\newcommand{\phisynch}{\varphi_{\textsf{synch}}}
\newcommand{\phioff}{\varphi_{\textsf{off}}}
\newcommand{\phifair}{\varphi_{\textsf{fair}}}
\newcommand{\psisynch}{\psi_{\textsf{synch}}}
\newcommand{\switch}{\mathrm{switch}}
\DeclareRobustCommand\bigop[1]{%
  \mathop{\vphantom{\sum}\mathpalette\bigop@{#1}}\slimits@
}
\newcommand{\bigop@}[2]{%
  \vcenter{%
    \sbox\z@{$#1\sum$}%
    \hbox{\resizebox{\ifx#1\displaystyle.9\fi\dimexpr\ht\z@+\dp\z@}{!}{$\m@th#2$}}%
  }%
}
\newcommand{\bigvarovee}{\DOTSB\bigop{\varovee}}
\newcommand{\te}[3]{\tau\ifx\empty#3\else_{#3}\fi\ifx\empty#1\else(#1)\fi\ifx\empty#2\else[#2]\fi}
\newtheorem*{rep@theorem}{\rep@title}
\newcommand{\newreptheorem}[2]{%
\newenvironment{rep#1}[1]{%
 \def\rep@title{#2 \ref{##1}}%
 \begin{rep@theorem}}%
 {\end{rep@theorem}}}
\newcolumntype{C}[1]{>{\centering\let\newline\\\arraybackslash\hspace{0pt}}m{#1}}
\begin{document}

\title{Temporal Team Semantics Revisited}

\author{Jens Oliver Gutsfeld}
\email{jens.gutsfeld@uni-muenster.de}
\author{Christoph Ohrem}
\email{christoph.ohrem@uni-muenster.de}
\affiliation{%
  \institution{Institut f\"ur Informatik, Universität M\"unster}
  \city{M\"unster}
  \country{Germany}
}

\author{Arne Meier}
\email{meier@thi.uni-hannover.de}
\orcid{0000-0002-8061-5376}
\affiliation{%
  \institution{Institut f\"ur Theoretische Informatik, Leibniz Universit\"at Hannover}
  \city{Hanover}
  \country{Germany}}

\author{Jonni Virtema}
\email{j.t.virtema@sheffield.ac.uk}
\orcid{0000-0002-1582-3718}
\affiliation{%
  \institution{Department of Computer Science, University of Sheffield}
  \city{Sheffield}
  \country{United Kingdom}
}


\begin{abstract}
Hyperproperties are an influential framework with the ability to describe important properties such as information flow policies, security requirements or relations between executions of threads in parallel systems.
Recently, temporal logics have been studied as an approach to the specification of hyperproperties, resulting in the conception of ``hyperlogics''.
With a few recent exceptions, the hyperlogics thus far developed can only relate different traces of a transition system in a synchronous manner, although important information is contained in the relation between different points in their asynchronous interaction.
In order to specify such ``asynchronous hyperproperties'', new trace quantifier based hyperlogics have been developed.
However, certain requirements describing relations between all executions of a system cannot be expressed in hyperlogics with trace quantification.
Additionally, these logics induce model checking problems with prohibitively high model checking complexity costs in the number of quantifier alternations.
In this paper, we study an alternative approach to asynchronous hyperproperties by introducing a novel foundation of temporal team semantics.
Team semantics is a logical framework that specifies properties of sets of traces of unbounded size directly, and thus does not have the same limitation as the quantifier based logics mentioned above.
We consider three logics: TeamLTL, TeamCTL and TeamCTL* which employ quantification over so-called ``time evaluation functions'' controlling the asynchronous progress of traces instead of quantification over traces. 
The use of time evaluation functions constitutes a novel approach to define expressive logics for hyperproperties where diverse asynchronous interactions between computations can be formalised and enforced.
We relate synchronous TeamLTL to our new logics and show how it can be embedded into them.
We show that the model checking problem for exists-TeamCTL with Boolean disjunctions is highly undecidable by encoding recurrent computations of non-deterministic 2-counter machines.
Finally, we present a translation from TeamCTL* to Alternating Asynchronous B\"uchi Automata (AABA), and obtain decidability results for the path checking problem as well as restricted variants of the model checking and satisfiability problems.
For the decidable fragments, the complexity is independent of the quantifier depth and indeed polynomial time in the size of the input system for fixed formulae.
Our translation constitutes the first approach to team semantics based on automata-theoretic methods.

\end{abstract}

\begin{CCSXML}
<ccs2012>
   <concept>
       <concept_id>10003752.10003790.10003793</concept_id>
       <concept_desc>Theory of computation~Modal and temporal logics</concept_desc>
       <concept_significance>500</concept_significance>
       </concept>
   <concept>
       <concept_id>10003752.10003777.10003779</concept_id>
       <concept_desc>Theory of computation~Problems, reductions and completeness</concept_desc>
       <concept_significance>500</concept_significance>
       </concept>
   <concept>
       <concept_id>10003752.10003790.10002990</concept_id>
       <concept_desc>Theory of computation~Logic and verification</concept_desc>
       <concept_significance>100</concept_significance>
       </concept>
 </ccs2012>
\end{CCSXML}

\ccsdesc[500]{Theory of computation~Modal and temporal logics}
\ccsdesc[500]{Theory of computation~Problems, reductions and completeness}
\ccsdesc[100]{Theory of computation~Logic and verification}

\keywords{Team Semantics, Temporal Logic, Hyperproperties, Automata Theory, Model Checking, Asynchronicity}

\maketitle

\section{Introduction}

Since the 1980s, model checking has become a staple in verification. For Linear Temporal Logic (LTL) and its progeny, the model checking problem asks whether every trace of a given system fulfils a given temporal specification such as a liveness or fairness property. 
Notably, this specification considers the traces of the input in isolation and cannot relate different traces to each other.
However, it is not hard to come up with natural properties that require viewing different traces in tandem. 
For example, asking whether the value of a variable $x$ on average exceeds some constant $c$ amounts to summing up the value of $x$ for \textit{all} traces and then averaging the result. In this context, the information given by a single trace viewed in isolation is of little avail.
Likewise, it does not suffice to consider properties of isolated traces, when we consider executions of parallel programs in which individual threads are represented by single traces.
The same is true for information flow properties of systems like observational determinism or generalised non-interference.
This need to be able to specify properties of collections of traces has led to the introduction of the notion of a \textit{hyperproperty} \cite{Clarkson+Schneider/10/Hyperproperties}.
Technically, a trace property is just a set of traces, and vice versa. Hyperproperties on the other hand describe properties of sets of traces, and thus correspond to sets of sets of traces.
Since established temporal logics like LTL  can express only trace properties, but not genuine hyperproperties, new logics were developed for hyperproperties. 
Generally, the approach has been to pick a temporal logic defined on traces and lift it to sets of traces by adding quantification over named paths, for example, LTL becomes HyperLTL \cite{clarkson14}, QPTL becomes HyperQPTL \cite{Rabe16}, PDL-$\Delta$ becomes HyperPDL-$\Delta$ \cite{GutsfeldMO20} and so on. \looseness=-1

A promising alternative approach for lifting temporal logics to hyperproperties is to shift to the so-called \emph{team semantics}.
In the past decade, team logics have established themselves as a vibrant area of research~\cite{Abramsky2016,grdel_et_al:DR:2019:10568}.
The term \emph{team semantics} was coined by V\"a\"an\"anen~\cite{DBLP:books/daglib/0030191}, inspired by the earlier work of Hodges~\cite{Hodges97c}. 
The idea behind all logics utilising team semantics is to evaluate formulae, not over single states of affairs such as assignments or traces, but over sets of such states of affairs (i.e., over \emph{teams}).
Soon after its origin, team semantics was already applied to first-order, propositional, and modal settings. 
At the heart of these logics lies the ability to enrich the logical language with novel atomic formulae for stating properties of teams. The most prominent of these atoms is the \emph{dependence atom} $\dep(\bar x,\bar y)$ stating that the variables $\bar x$ functionally determine the values of $\bar y$ with respect to some given team (a set of assignments). Another important atom is the \emph{inclusion atom} $\bar x \subseteq \bar y$ expressing the inclusion dependency that all the values that occur for $\bar x$ in a given team, also occur as a value for $\bar y$.
Team logics implement concepts and formalisms from a wealth of different disciplines such as statistics and database theory~\cite{grdel_et_al:DR:2019:10568}. 
While the bulk of the research has concerned itself with logics expressing qualitative properties of data, recent discoveries in multiset \cite{DBLP:journals/tocl/GradelW22} and probabilistic \cite{HannulaKBV20,DBLP:conf/jelia/HannulaV21} variants of team semantics have shifted the focus to the quantitative setting.


Recently, Krebs~et~al.~\cite{kmvz18} made an important advancement in the field by introducing the first team-based temporal logics for hyperproperties. 
The logic $\teamltl$ does not add quantifiers or names to LTL, but instead achieves the lifting to hyperproperties by adapting team semantics, i.e., by evaluating formulae directly over sets of traces and adding new atomic statements that can be used to express hyperproperties such as \emph{non-inference} directly.
Like LTL, but unlike HyperLTL and related logics, $\teamltl$ retains the property of being a purely combinatory, trace quantifier-free logic. 
It can also express specifications for which no analogue in HyperLTL is available.
Most works on the named quantifier approach and the team semantics approach concentrate on synchronous interactions between traces.
In 2018, Krebs~et~al.~\cite{kmvz18} introduced two different semantics for $\teamltl$: a synchronous one and an asynchronous one. 
They can be seen as polar opposites: in the first, computations progress in lockstep, and in the second, there is no way of relating the passage of time between distinct traces.
This asynchronous semantics is rather weak and cannot deal with the plethora of ways asynchronicity occurs in real-world systems. For example, in order to capture multithreaded environments in which processes are not scheduled lockstepwise but still have some rules governing the computation, a setting that can model different modes of asynchronicity is required. 
The ubiquity of asynchronicity thus calls for the development of new hyperlogics that can express asynchronous specifications.
In 2021, Gutsfeld~et~al.~\cite{GutsfeldMO21} conducted the first systematic study of asynchronous hyperproperties and introduced both a temporal fix-point calculus, H$_\mu$, and an automata-theoretic framework, Alternating Asynchronous Parity Automata (AAPA), to tackle this class of properties.

\paragraph{Our contribution.} In this paper, we present a new approach to $\teamltl$ by using explicit quantification over \textit{time evaluation functions} (tefs for short) that describe the asynchronous interleavings of traces. This allows for the fine-grained use of asynchronicity in $\teamltl$ specifications.
Using the new approach, we reconstruct the semantics of $\teamltl$ from scratch, thereby also defining novel team semantics variants of $\CTL$ and $\CTLS$ (i.e., $\teamctl$ and $\teamctls$).
As an example (see Sections \ref{sec:logics} and \ref{sec:extensions} for the precise semantics), let $o_1, \ldots, o_n$ be observable outputs, $c_1, c_2$ be confidential outputs, and $s$ be a secret. The formula $\varphi \dfn \Gm_\forall (o_1,\dots,o_n, s) \subseteq (o_1,\dots, o_n, \neg s)$ expresses a form of \emph{non-inference} by stating that independent of the asynchronous behaviour of the system, the observer cannot infer the current value of the secret from the outputs. The formula $\psi \dfn \Gm_\exists \dep(c_1,c_2,s)$ expresses that for some asynchronous behaviour of the system, the confidential outputs functionally determine the secret. Finally, the formula $\psi \lor \varphi$ states that the executions of the system can be decomposed into two parts; in the first part, the aforementioned dependence holds, while in the second part, the non-inference property holds.

We wish to emphasise that we are not redefining asynchronous $\teamltl$ (or synchronous $\teamltl$ for that matter). Our goal is to define semantics for $\teamltl$ that is versatile enough to deal with the plethora of different modes of asynchronicity that occur in real-world applications. We propose a formalism that can express both synchronous and asynchronous behaviour. Indeed, we show that synchronous $\teamltl$ can be embedded into our new logics. Asynchronous $\teamltl$ (as defined by Krebs~et~al.~\cite{kmvz18}) cannot be directly embedded into our new setting, for each time evaluation function describes a dependence between the global clock and the local clocks. In asynchronous $\teamltl$, no such dependence exists and the setting resembles somewhat our $\teamctl$, albeit with modified semantics.

Besides quantified \tefs and LTL constructs,  we also study several extensions of TeamLTL with different atoms from the team semantics literature, e.g. the dependence and inclusion atoms mentioned above. 
We establish that our logics provide a unifying framework in which previous temporal logics with team semantics can be embedded in an intuitive and efficient manner.
We show that the model checking problem is highly undecidable already for the extension of $\exteamctl$ with the Boolean disjunction.
However, we also present a translation from $\teamctls$ to Alternating Asynchronous B\"uchi Automata (AABA), a subset of AAPA with a B\"uchi condition, over finite teams of fixed size. This translation allows us to transfer \emph{restricted interleaving semantics} for AAPA, such as the \emph{$k$-synchronous} and \emph{$k$-context-bounded} semantics of \cite{GutsfeldMO21}, to $\teamctls$ and employ decidability results for these restricted semantics for the path checking problem and finite variants of the satisfiability and model checking problems.
This translation is of independent interest because it constitutes the first application of automata-theoretic methods in the context of team semantics and dependence logic, and can therefore serve as a cornerstone for further development in this area. 
For the decidable fragments, the complexity of the model checking problem is independent of the quantifier depth of the input formula.
In particular, for fixed formulae, it is polynomial time in the size of the input system, which is not the case for HyperLTL and its progeny. 
Our approach constitutes the first approach to team semantics based on automata-theoretic methods.
Our complexity results for the model checking problem are depicted on page~\pageref{tbl:overview} in Table~\ref{tbl:overview}.

\paragraph{Related work.}
Hyperlogics that add quantifiers for named paths were studied in\cite{clarkson14,finkbeinerRS15,BozzelliMP15,Finkbeiner017,GutsfeldMO20,BeutnerF21}. These logics are orthogonal to ours as they do not involve team semantics.
There are also multiple works on hyperlogics with team semantics
\cite{KontinenS21,DBLP:conf/time/KrebsMV15,kmvz18,DBLP:journals/corr/abs-2010-03311,DBLP:journals/tcs/Luck20}. 
These \textit{team logics} either have purely synchronous semantics or do not allow for fine-grained control of the asynchronicity as do our time evaluation functions and fragments with restricted asynchronous semantics.
The recent work of Kontinen et al. \cite{KontinenSV23} has revealed fascinating connections between asynchronous TeamLTL and HyperLTL, as well as pinpointed computationally well behaved fragments of asynchronous TeamLTL.

In recent years, several works have studied temporal logics for asynchronous hyperproperties \cite{GutsfeldMO21,DBLP:conf/cav/BaumeisterCBFS21,DBLP:conf/lics/BozzelliPS21,Bonakdarpour2020,DBLP:conf/cav/BeutnerF22,DBLP:journals/pacmpl/GutsfeldMO24}.
Gutsfeld~et~al.~\cite{GutsfeldMO21} study asynchronous hyperproperties using the fixed-point calculus H$_{\mu}$ and Alternating Asynchronous Parity Automata. 
Other asynchronous variants of HyperLTL have been introduced in \cite{DBLP:conf/cav/BaumeisterCBFS21,DBLP:conf/lics/BozzelliPS21, Bonakdarpour2020,DBLP:conf/cav/BeutnerF22}. 
Bozelli~et~al.~\cite{DBLP:conf/lics/BozzelliPS21} focus on both HyperLTL variants with special modalities referring to stuttering on paths and contexts describing asynchronous behaviour.
Similar to the stuttering mechanism used in \cite{DBLP:conf/lics/BozzelliPS21}, Beutner and Finkbeiner~\cite{DBLP:conf/cav/BeutnerF22} use an observation mechanism and Gutsfeld~et~al. \cite{DBLP:journals/pacmpl/GutsfeldMO24} use a mumbling mechanism to model asynchronous progress.
Bonakdarpour~et~al.~\cite{Bonakdarpour2020} and Baumeister~et~al.~\cite{DBLP:conf/cav/BaumeisterCBFS21} take a different approach and use quantification over so-called \textit{trajectories} which determine the asynchronous interleaving of traces. 
These trajectories are similar to our \tefs, but they are not studied in the context of team semantics or AABA, and no specific analysis of their properties is presented.
As we show, variants of our logics can express properties such as synchronicity or fairness for \tefs, and there is no way in sight to do this in the logic of Baumeister~et~al.~\cite{DBLP:conf/cav/BaumeisterCBFS21}.
There are two systematic studies on the expressive power of hyperlogics \cite{CoenenFHH19,DBLP:conf/concur/BozzelliPS22}.
While Coenen~et~al.~\cite{CoenenFHH19} mainly compare synchronous hyperlogics, Bozzelli~et~al.~\cite{DBLP:conf/concur/BozzelliPS22} focus on asynchronous hyperlogics and compare some of the logics mentioned above.
However, none of the logics they compare employ team semantics. 

\paragraph{Organisation.}
The paper is organised as follows. 
In Section~\ref{sec:preliminaries}, we introduce the necessary preliminaries and notation. 
Section~\ref{sec:tefs-and-temporal-teams} introduces time evaluation functions and temporal teams. 
In Section~\ref{sec:logics}, we define the logics $\teamltl$, $\teamctl$, and $\teamctls$. 
In Section~\ref{sec:extensions}, we introduce extensions of these logics with dependence and inclusion atoms as well as Boolean disjunction, non-emptiness atoms, and the universal subteam quantifier $\flatop$. 
We continue with a consideration of what one can express in these logics in Section~\ref{subsec:tefproperties}. 
In Section~\ref{sec:synchronous}, we start with a connection of synchronous $\teamltl$ and existential $\teamltl$ on the level of satisfiability as well as to the universal fragment regarding validity. 
Afterwards, we construct translations from synchronous $\teamltl$ into different subfragments of $\teamctls$. 
The next section is devoted to the model checking problem for the existential fragment $\teamctl$ with Boolean disjunction. 
Motivated by the undecidability of this problem, we present a translation from $\teamctls$ to Alternating Asynchronous B\"uchi Automata in Section~\ref{sec:translation}.
Finally, we conclude the paper with a discussion of our results and an outlook on future work.

This paper is an extended version of the article ``Temporal Team Semantics Revisited'' by the same authors, which appeared in the proceedings of LICS 2022. 
In the following, we highlight the additions and changes made to the conference version:
\begin{itemize}
	\item We improved the notation for the semantics from $(T,\tau)$ to incorporate timesteps directly via $(T,\tau,i)$.
	\item The full proof of Theorem~\ref{thm:GA} is now presented and the formulation has been revised. The operator $\flatop$ has been removed from the statement. 
	\item We added an example (Example~\ref{example:extended-atoms}) to explain our considerations when defining extended atoms. 
	\item Direct proofs of Propositions~\ref{prop:teamltl-all-not-union-closed} and \ref{prop:teamltl-all-downward-closed} are presented.
	\item A detailed proof of Theorem~\ref{thm:synchltl-sat-validity} is presented.
	\item We restructured Section~\ref{sec:synchTeamLTL-embedding} to better highlight the connections between the presented arguments and our results.
	\item We improved Figure~\ref{fig:trace-counter} to make the relation of the traces easily visible.
	\item We restructured and improved the presentation as well as added proof details for the results from Section~\ref{sec:translation}.
    \item We extended Theorem \ref{thm:aaba-embed} and Corollary \ref{cor:teamctls-k-synch-k-bounded} to cover Boolean negation as well.
\end{itemize}

\section{Preliminaries}\label{sec:preliminaries}

We assume familiarity with complexity theory \cite{DBLP:books/daglib/0018514} and make use of the classes $\mathrm{PSPACE}$, $\mathrm{EXPSPACE}$, and $\mathrm{P}$.
Also, we deal with different degrees of undecidability, e.g., $\Sigma_1^0$ and $\Sigma^1_1$. 
A thorough introduction in this regard can be found in the textbook of Pippenger~\cite{DBLP:books/daglib/0092426}.

\paragraph*{General Notation.} If $\vec{a}=(a_0,\dots,a_{n-1})$ is an $n$-tuple of elements and $i<n$ a natural number, we set $\vec{a}[i] \dfn a_i$. 
For $n$-tuples $\vec{a},\vec{b}\in\N^n$ ($n\in\N \cup \set{\omega}$), write $\vec{a}\leq \vec{b}$ whenever $\vec{a}[i]\leq\vec{b}[i]$, for each $i< n$; write $\vec{a}<\vec{b}$, if additionally $\vec{a}\neq \vec{b}$.

\paragraph*{Multisets.} Intuitively, a multiset is a generalisation of a set that records the multiplicities of its elements. 
The collections $\{a,a,b\}$ and $\{a,b\}$ are different multisets, while they are identical when interpreted as sets.
Here, we encode multisets as sets by appending unique indices to the elements of the multisets.

Let $I$ be some infinite set of indices such as $\N \cup \N^\omega$. 
A \emph{multiset} is a set $A$ of pairs $(i,v)$, where $i\in I$ is an index value and $v$ is a set element, such that $a[0]\neq b[0]$ for all distinct $a,b\in A$. 
Multisets $A$ and $B$ are the \emph{same multisets} (written $A=B$), if there exists a bijection $f\colon I\rightarrow I$ such that $B=\setprop{\big(f(a[0]), a[1]\big)}{a\in A}$. 
Using this notation, the collection $\{a,a,b\}$ can be written, e.g., as $\{(1,a),(2,a),(42,b)\}$. 
When denoting elements $(i,v)$ of multisets, we often drop the indices and write simply the set element $v$ instead of the pair $(i,v)$. 
The \emph{disjoint union} $A\uplus B$ of multisets $A$ and $B$ is defined as the set union $A'\cup B'$, where $A'= \setprop{\big((i,0),v\big)}{(i,v)\in A}$ and $B'= \setprop{\big((i,1),v\big)}{(i,v)\in B }$.

\paragraph*{Kripke Structures and Traces.}
Fix a set $\ap$ of \emph{atomic propositions}.
A \emph{rooted Kripke structure} is a $4$-tuple $\KM=(W, R, \eta , r)$, where $W$ is a finite non-empty set of states, $R\subseteq W^2$ a left-total relation, $\eta\colon W\rightarrow 2^\ap$ a labelling function, and $r\in W$ an initial state of $W$. 
A \emph{path} $\sigma$ through a Kripke structure $\KM=(W,R,\eta,r)$ is an infinite sequence $\sigma \in W^\omega$ such that $\sigma[0]= r$ and $(\sigma[i], \sigma[i + 1]) \in R$ for every $i \geq 0$.
A \emph{trace} is an infinite sequence from $(2^\ap)^\omega$. We call a trace \emph{ultimately periodic} if it has the form $uv^\omega$, where $u$ and $v$ are finite sequences from $(2^\ap)^*$. This ultimately periodic trace is encoded by the pair $(u,v)$.
The \emph{trace of a path $\sigma$} is defined as $t(\sigma) \dfn \eta(\sigma[0])\eta(\sigma[1])\dots \in (2^\ap)^\omega$. 
A Kripke structure $\KM$ induces a multiset of traces, defined as $\traces(\KM) \dfn \setprop{\big(\sigma, t(\sigma) \big)}{\sigma \text{ is a path through $\KM$}}$.

\paragraph*{Temporal Logics.} Let us start by recalling the syntax of $\CTLS$, $\CTL$, and $\LTL$ from the literature~\cite{DBLP:books/daglib/0007403}.
We adopt, as is common in studies on team logics, the convention that formulae are given in negation normal form. 
We again fix a set $\ap$ of atomic propositions. 
The set of formulae of $\CTLS$ (over $\ap$) is generated by the following grammar:
\[
\varphi \ddfn p \mid \neg p  \mid (\varphi \lor \varphi) \mid (\varphi\land \varphi) \mid \X \varphi \mid [\varphi \Um \varphi] \mid [\varphi \W \varphi] \mid \exists \varphi \mid \forall \varphi, 
\]
where $p \in \ap$ is a proposition symbol, $\X$, $\Um$ and $\W$ are temporal operators, and $\exists$ and $\forall$ are path quantifiers.
$\CTL$ is the syntactic fragment of $\CTLS$, where each temporal operator directly follows a path quantifier (and vice versa). In order to simplify the notation, we write $\X_\forall \varphi$, $\psi\Um_\exists \varphi$, etc., instead of $\forall \X \varphi$ and $\exists [\psi\Um \varphi]$.
That is, the $\CTL$ syntax (over $\ap$) is given by the grammar:
\begin{multline*}
\varphi \ddfn p \mid \neg p  \mid (\varphi \lor \varphi) \mid (\varphi\land \varphi) \mid \X_\exists \varphi \mid \X_\forall \varphi \mid \\
[\varphi \Um_\exists \varphi] \mid [\varphi \Um_\forall \varphi] \mid [\varphi \W_\exists \varphi]\mid [\varphi \W_\forall \varphi], 
\end{multline*}
where $p \in \ap$.
Finally, $\LTL$ is the syntactic fragment of $\CTLS$ without any path quantifiers.
The Kripke semantics for $\CTLS$ is defined in the usual manner with respect to Kripke structures and traces generated from them~\cite{DBLP:reference/mc/PitermanP18}.
For an $\LTL$-formula $\varphi$, a trace $t$, and $i\in\N$, we write $\eval{\varphi}_{(t,i)}$ for the truth value of $(t,i)\Vdash\varphi$ using standard LTL Kripke semantics.
The logical constants $\top,\bot$ and connectives $\rightarrow,\leftrightarrow$ are defined as usual (e.g., $\bot \dfn (p \land \neg p)$).
Furthermore, we use modalities $\F \varphi\coloneqq[\top\Um\varphi]$ and $\Gm\varphi\coloneqq[\varphi\W\bot]$.

\section{Revisiting temporal team semantics}

In this section, we return to the drawing board and reconstruct the semantics of $\teamltl$ from scratch. By doing so, we end up also defining team semantics variants of $\CTL$ and $\CTLS$ (i.e., $\teamctl$ and $\teamctls$). Our starting goal is to consider hyperproperties in a setting where synchronicity of the passage of time between distinct computation traces is not presupposed. Instead, we stipulate a global lapse of time (global clock) and relate the lapse of time on computation traces (local clocks) to the lapse of time on the global clock using a concept we call \emph{time evaluation functions}. 
Our approach here is similar to the one of Baumeister~et~al.~\cite{DBLP:conf/cav/BaumeisterCBFS21} and Bonakdarpour ~et~al.~ \cite{Bonakdarpour2020}, where our \emph{time evaluation functions} are called \emph{trajectories}. 
\begin{table}[t]
	\centering
\begin{tabular}{ll}\toprule
	\textbf{Property of tef} & \textbf{Definition}\\\midrule
	Monotonicity& $\forall i\in \N:\tau(i)\leq \tau(i+1)$\\
	Strict Mon.& $\forall i\in \N:\tau(i)< \tau(i+1)$\\
	Stepwiseness& $\forall i\in \N: \tau(i)\leq \tau(i+1) \leq \tau(i)+\vec 1$\\
	*Fairness& $\forall i\in \N\, \forall t\in T\, \exists j\in\N: \tau(j,t)\ge i$\\
	*Non-Parallelism& $\forall i\in \N: i=\sum_{t\in T}\tau(i,t)$\\
	*Synchronicity& $\forall i\in\N\,\forall t,t'\in T:\tau(i,t)=\tau(i,t')$\\
	\bottomrule
\end{tabular}
	
	\caption{Some properties of \tefs. \textbf{*} marks optional properties. 
 }\label{tbl:tef-props}
\end{table}
\subsection{Time evaluation functions and temporal teams}\label{sec:tefs-and-temporal-teams}

Given a (possibly infinite) multiset of traces $T$, a \emph{time evaluation function} (\emph{\tef} for short) for $T$ is a function $\tau\colon \N\times T \rightarrow \N$ that, given a trace $t\in T$ and a value of the global clock $i\in \N$, outputs the value $\tau(i,t)$ of the local clock of trace $t$ at global time $i$.
We write $\tau(i)$ to denote the tuple $\big(\tau(i,t)\big)_{t\in T}$ obtained using some canonical order on $T$. 
Needless to say, not all functions $\tau\colon \N\times T \rightarrow \N$ satisfy properties that a function should \emph{a priori} satisfy in order to be called a time evaluation function. 
We refer the reader to Table \ref{tbl:tef-props} for a list of \tef properties considered in this paper. 
Intuitively, a tef is \emph{monotonic} if the values of the local clocks only increase;
\emph{strict monotonicity} requires that at least one local clock  advance in every step; 
\emph{stepwiseness} refers to local clocks advancing at most one step each time;
\emph{fairness} implies that no local clock gets stuck infinitely long;
\emph{non-parallelism} forces exactly one clock to advance each time step;
\emph{synchronicity} means that all clocks advance in lockstep.

In the current paper, we are designing logics for hyperproperties of discrete linear time execution traces. It is thus clear that all \tefs should at least satisfy \emph{monotonicity}. One design principle of our setting is to use a global reference clock in addition to the local clocks of the computations. As the global clock can be seen to specify the granularity of the passage of time, it is natural to assume that no local clock can advance faster than the global clock. Thus, we stipulate that every \tef must satisfy \emph{stepwiseness}.
Finally, a crucial property that should hold for all logics with team semantics is that $\teamltl$ should be a conservative extension of LTL. 
That is, on singleton teams, $\teamltl$ semantics should coincide with the semantics of standard non-team-based LTL. 
In order, for example, for the next operator $\X$ to enjoy this invariance between LTL and $\teamltl$, we stipulate \emph{strict monotonicity} instead of simple \emph{monotonicity} as a property for all \tefs.

Table \ref{tbl:tef-props} also lists some optional properties of \tefs.
\emph{Synchronicity} links the passings of time between different computation traces.
When this property is assumed, one obtains synchronous TeamLTL.
Synchronicity is incompatible with \emph{non-parallelism} and, in conjunction with strict monotonicity, implies \emph{fairness}.
The former is a desirable property in some contexts and undesirable in others.
Our choice of making \emph{fairness} optional might feel controversial to some.
However, we argue that having the possibility that some computation traces may freeze is more natural than disallowing it completely.
Moreover, in Section \ref{subsec:tefproperties} we establish that in many cases our logics are expressive enough to specify that a \tef must satisfy fairness.

We arrive at the following formal definitions. For technical reasons, we will define our semantics by using stuttering \tefs (defined below). Note that stuttering \tefs can arise when \tefs are restricted to some subset $T'\subseteq T$ of traces (see the definition of disjunction in Section \ref{sec:logics}).
\begin{definition}
A function of the type $\N\times T \rightarrow \N$ is
a \emph{stuttering \tef for $T$} if it satisfies monotonicity and stepwiseness,
a \emph{\tef for $T$} if it satisfies strict monotonicity and stepwiseness, and
a \emph{synchronous \tef for $T$} if it satisfies strict monotonicity, stepwiseness, and synchronicity.
A \tef is \emph{initial}, if $\tau(0,t)=0$ for each $t\in T$.
\end{definition}

\begin{definition}
A \emph{temporal team} is a pair $(T,\tau)$, where $T$ is a multiset of traces and $\tau$ is a \tef for $T$.
A pair $(T,\tau)$ is called a \emph{stuttering temporal team} if $\tau$ is a stuttering \tef for $T$.
\end{definition}

\subsection{\texorpdfstring{\boldmath}{}TeamLTL, TeamCTL, and TeamCTL\texorpdfstring{$^*$\unboldmath}{*}}\label{sec:logics}

Let $(T, \tau)$ be a stuttering temporal team.
Team semantics for $\LTL$ (i.e., $\teamltl$) is defined recursively as follows with respect to every $i\in\N$. 
{\small
\begin{align*}
	&(T, \tau,i )\models p   &&\text{iff}&& \forall t \in T: p\in t[\tau(i,t)]\\
	&(T, \tau,i )\models \lnot p  &&\text{iff}&& \forall t \in T: p\notin t[\tau(i,t)]\\
	&(T, \tau,i )\models (\varphi\land\psi)  &&\text{iff}&& (T, \tau, i)\models\varphi  \text{ and } (T, \tau, i)\models\psi\\
	&(T, \tau, i)\models (\varphi\lor\psi)  &&\text{iff}&& \exists T_1\uplus T_2=T:(T_1, \tau, i)\models\varphi \text{ and }(T_2, \tau, i)\models\psi\\
	&(T, \tau, i) \models\X\varphi  &&\text{iff}&& (T, \tau, i+1)\models\varphi\\
	&(T, \tau, i)\models[\varphi\Um\psi] &&\text{iff}&& \exists k\geq i\text{ such that } (T, \tau, k)\models\psi \text{ and }\\
	&&&&&\quad\quad \forall m: i \leq m <k \Rightarrow (T, \tau, m)\models\varphi\\
	&(T, \tau, i)\models[\varphi\W\psi] &&\text{iff}&&\forall k\geq i : (T, \tau, k)\models\varphi \text{ or }\\
	&&&&&\quad\exists m \text{ s.t. } 
	i\leq m\leq k \text{ and } (T, \tau, m)\models\psi 
\end{align*}
}
Note that $(T,\tau,i)\models \bot$ iff $T=\emptyset$, where $\bot \dfn (p\lor \neg p)$.
One reason for setting that $(\emptyset,\tau,i)\models \bot$ is for obtaining equivalencies between formulae of the form $\varphi$ and $(\varphi\lor \bot)$.
If $\tau$ is an initial synchronous \tef, we obtain the synchronous team semantics of $\LTL$ as defined by Krebs~et~al.~\cite{kmvz18}.

In the literature, there exist two variants of team semantics for the split operator $\lor$: strict and lax semantics \cite{galliani12}. 
Strict semantics enforces the split to be a partition whereas lax does not.
Very recent work by Barlag~et~al.~\cite{BarlagHKPV23} gives insight into the reasons behind the different definitions.
The general definition of disjunction in team semantics is based on convex combinations; a team $T$ satisfies a disjunctive formula $(\varphi_1 \lor \varphi_2)$, if $T$ can be constructed as a convex combination of teams $T_1$ and $T_2$ such that $T_i\models \varphi_i$, for $i\in \{1,2\}$.
The concrete interpretation of $\lor$ then depends on how the notion of convex combination is defined in a particular setting. In set-based semantics (where team elements can be seen to be annotated with elements of the Boolean semiring; see \cite{BarlagHKPV23} for further details) disjunction will be interpreted as a set union, while in multiset-based semantics (where team elements can be seen to be annotated with elements of the semiring of natural numbers) the correct interpretation of disjunction is a disjoint union.

While any given multiset of traces $T$ induces a unique initial synchronous \tef, the same does not hold for \tefs in general.
Consequently, two different modes of $\teamltl$ satisfaction naturally emerge: \emph{existential} (a formula is satisfied by some initial \tef) and \emph{universal} (a formula is satisfied by all initial \tefs) satisfaction. In the special case where a unique initial \tef exists, these two modes naturally coincide.

Given a multiset of traces $T$ and a formula $\varphi \in \teamltl$, we write $T\models_\exists \varphi$ if $(T,\tau,0)\models \varphi$ for some initial \tef $\tau$ of $T$. Likewise, we write $T\models_\forall \varphi$ if $(T,\tau,0)\models \varphi$ for all initial \tefs $\tau$ of $T$.
Finally, we write  $T\models_s \varphi$ if $(T,\tau,0)\models \varphi$ for the unique initial synchronous \tef $\tau$ of $T$.

We sometimes refer to the universal and existential interpretations of satisfaction by using $\ateamltl$ and $\exteamltl$, respectively. 
For referring to the synchronous interpretation, we write \emph{synchronous $\teamltl$}.


$\teamctl$ and $\teamctls$ loan their syntax from $\CTL$ and $\CTLS$, respectively. However, while the quantifiers $\exists$ and $\forall$ refer to path quantification in $\CTL$ and $\CTLS$, in the team semantics setting the quantifiers to range over \tefs.
The formal semantics of the quantifiers are as one would assume:
{\begin{align*}
(T, \tau, i) \models \exists \varphi \text{ iff }& (T, \tau', i) \models \varphi \text{ for some \tef $\tau'$ of $T$ s.t.}\\ 
&\tau'(j,t)=\tau(j,t)\text{ for all }j\leq i \text{ and } t\in T,\\
(T, \tau, i) \models \forall \varphi \text{ iff }& (T, \tau', i) \models \varphi \text{ for all \tefs $\tau'$ of $T$ s.t.}\\
&\tau'(j,t)=\tau(j,t)\text{ for all }j\leq i \text{ and } t\in T,
\end{align*}}
We write $\exteamctl$ and $\ateamctl$ to denote the fragments of $\teamctl$ without the modalities $\{ \Um_\forall,  \W_\forall, \X_\forall\}$ and $\{\Um_\exists, \W_\exists, \X_\exists\}$, respectively. Likewise, we write  $\exteamctls$ and $\ateamctls$ to denote the fragments of $\teamctls$ without the quantifier $\forall$ and $\exists$, respectively. We extend the notation $\models_\exists$ and $\models_\forall$ to $\teamctls$-formulae as well.

In this paper, we consider the following decision problems for different combinations of logics $L \in \{\teamltl, \teamctl,$ $\teamctls\}$ and modes of satisfaction $\models_{*} \in \{\models_\exists,\models_\forall,\models_s\}$.
\begin{description}
	\item[Satisfiability:] Given an $(L,\models_{*})$-formula $\varphi$, is there a multiset of traces $T$ such that $T \models_{*} \varphi$?
	\item[Model Checking:] Given an $(L,\models_{*})$-formula $\varphi$ and a Kripke structure $\KM$, does $\traces(\KM) \models_{*} \varphi$ hold?
	\item[Path Checking:] Given an $(L,\models_{*})$-formula $\varphi$ and a finite multiset of ultimately periodic traces $T$, does $T \models_{*} \varphi$?
\end{description}

\subsection{\texorpdfstring{\boldmath}{}Extensions of TeamLTL, TeamCTL, and TeamCTL\texorpdfstring{$^*$\unboldmath}{*}}\label{sec:extensions}
Team logics can easily be extended by atoms describing properties of teams.
These extensions are a well-defined way to delineate the expressivity and complexity of the logics we consider.
The most studied of these atoms are {\em dependence atoms} $\dep(\varphi_1,\dots,\varphi_n,\psi)$ and {\em inclusion atoms} $\varphi_1,\dots,\varphi_n  \subseteq  \psi_1,\dots,\psi_n$, where $\varphi_1,\dots,\varphi_n ,\psi,\psi_1,\dots,\psi_n$ are propositional formulae.

In the team semantics literature, atoms whose parameters are propositional variables are often called (proper) atoms, while extended atoms allow arbitrary formulae without atoms in their place. 
Krebs~et~al.~\cite{kmvz18} allowed arbitrary $\LTL$-formulae as parameters. 
Here, we take a middle ground and restrict parameters of atoms to propositional formulae. 
One reason for this restriction is that the combination of extended atoms and time evaluation functions can have unwanted consequences in the $\teamltl$ setting. 
Note that there are two possibilities to define extended atoms. 
We illustrate this with an example.
\begin{example}\label{example:extended-atoms}
        Assume we want to express that ``the value of $\F p$ is constant''.
	Consider a team $T$ and a tef $\tau$.
	The extended atom $\dep(\F p)$ evaluated on $(T,\tau,i)$ could be formally defined in two ways.
        We could interpret $\dep(\F p)$ such that $\F p$ needs to get the same truth value either for all $(t,\tau(i,t))$, where $t\in T$, or for all $(t',i)$, when we define $t'$ to be the trace such that $t'[j] = t[\tau(j,t)]$ for each $j\in\N$.
	In the former, the time-evaluation function $\tau$ is only used for obtaining the state of the local clocks where $\F p$ is to be evaluated.
    In the latter, we take the trace obtained from $t$ by applying the stutterings obtained from $\tau$.
\end{example}
Neither of the above definitions is satisfactory.
In the first interpretation of $\dep(\F p)$, the satisfaction of $(T,\tau,i)\models \dep(\F p)$ is independent of $\tau(j)$, for $j > i$.
That is, the satisfaction of a future modality does not depend on the relation between the global and local clocks in the future.
For the second interpretation, where the evaluation depends on $\tau$, $\dep(\F p)$ does not describe the constancy of $\F p$ with respect to the traces in $T$.

Dependence atoms state that the truth value of $\psi$ is functionally determined by the truth values of all $\varphi_1, \ldots, \varphi_n$. 
Inclusion atoms state that each value combination of $\varphi_1, \ldots, \varphi_n$ must also occur as a value combination of $\psi_1, \ldots, \psi_n$. 
Their formal semantics is defined as follows:
\begin{align*}
	&(T, \tau, i)\models  \dep(\varphi_1,\dots,\varphi_n,\psi)  ~~\text{iff}~~ \forall t ,t' \in T :\\
	&\qquad\bigwedge_{1\leq j\leq n}\eval{\varphi_j}_{(t,\tau(i,t))} = \eval{\varphi_j}_{(t',\tau(i,t'))} \\
	&\qquad\qquad\qquad\qquad\text{ implies } \eval{\psi}_{(t,\tau(i,t))} = \eval{\psi}_{(t',\tau(i,t'))},\\
	&(T, \tau, i)\models \varphi_1,\dots,\varphi_n  \subseteq  \psi_1,\dots,\psi_n  ~~\text{iff}~~ \forall t\in T \, \exists t'\in T :\\
	&\qquad\bigwedge_{1\leq j\leq n}  \eval{\varphi_j}_{(t,\tau(i,t))} = \eval{\psi_j}_{(t',\tau(i,t'))}.
\end{align*}
Notice that, when dependence atoms are used with only one parameter, as in $\dep(\psi)$, they correspond to so-called \emph{constancy atoms}, stating that the truth value $\eval{\psi}$ of the parameter is constant.
We also consider other connectives studied in the team semantics literature:
 \emph{Boolean disjunction} $\bor$, \emph{Boolean negation} $\bneg$, the
	\emph{non-emptiness atom} $\NE$, and the {\em universal subteam quantifier} $\flatop$, with their semantics defined as: 
	\[
	\begin{array}{lcl}
		(T, \tau, i) \models (\varphi \bor \psi)  &\text{iff}& (T,\tau, i) \models \varphi \text{ or } (T,\tau, i) \models \psi\\
		(T, \tau, i) \models \bneg\varphi &\text{iff}& (T,\tau, i) \not\models \varphi \\
		 (T, \tau, i) \models \NE  &\text{iff}& T\neq\emptyset \\
		(T, \tau, i) \models \flatop \varphi  &\text{iff}& \forall t\in T: (\{t\}, \tau, i) \models \varphi
	\end{array}
	\]
If $\mathcal C$ is a collection of atoms and connectives, we denote by $\teamltl(\mathcal C)$  the extension of $\teamltl$ with the atoms and connectives in $\mathcal C$. 
For any atom or connective $\circ$, we write $\teamltl(\mathcal C, \circ)$ instead of $\teamltl(\mathcal C \cup \{\circ\})$.
Most of these extensions will be considered throughout the whole paper while Boolean negation will only be used in Section \ref{sec:translation}.  
A more extensive investigation on the interplay of $\bneg$ within team semantics is given by Lück~\cite{DBLP:journals/lmcs/Luck19}.

It is known that, in the setting of synchronous $\teamltl$, all (all downward closed, resp.) \emph{Boolean properties} of teams are expressible in $\teamltl($$\bor$$, \NE, \flatop)$ (in $\teamltl($$\bor$$,\flatop)$, resp.)~\cite{DBLP:journals/corr/abs-2010-03311}.
Let $B$ be a set of $n$-ary Boolean relations and $\varphi_1,\dots,\varphi_n$ propositional formulae. 
We define the semantics of an expression $[\varphi_1,\dots,\varphi_n]_B$ as follows:
\begin{multline*}
(T, \tau, i) \models [\varphi_1,\dots,\varphi_n]_B \quad\text{iff}\\
\{\, (\eval{\varphi_1}_{(t,\tau(t,i))}, \dots,  \eval{\varphi_n}_{(t,\tau(t,i))} ) \mid t\in T  \,\} \in B.	
\end{multline*}
Expressions of the form $[\varphi_1,\dots,\varphi_n]_B$ are called \emph{generalised atoms}. 
If $B$ is downward closed (i.e., $S\in B$ whenever $S\subseteq R\in B$), it is \emph{a downward closed generalised atom}. 
Dependence and inclusion atoms can also be defined as generalised atoms.

The following was proved in the setting of synchronous $\teamltl$. 
It is, however, easy to check that the same proof works also in our more general setting.
\begin{theorem}[\cite{DBLP:journals/corr/abs-2010-03311}]\label{thm:GA}
Any generalised atom is expressible in $\teamltl(\bor, \NE)$ and all downward closed generalised atoms can be expressed in $\teamltl(\bor)$.	
\end{theorem}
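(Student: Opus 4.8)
The plan is to reduce to the known synchronous argument by first observing why it transfers: neither the generalised atom nor any of the connectives $\bor,\NE,\flatop,\lor,\land$ shifts the time evaluation function, since each evaluates its propositional subformulae only at the current positions $\tau(0,t)$. Hence, for a fixed temporal team $(T,\tau)$, everything depends solely on the finite set of Boolean \emph{types} $S\dfn\{\,(\eval{\varphi_1}_{(t,\tau(0,t))},\dots,\eval{\varphi_n}_{(t,\tau(0,t))})\mid t\in T\,\}\subseteq\{0,1\}^n$, and the atom $[\varphi_1,\dots,\varphi_n]_B$ holds iff $S\in B$. This tef-insensitivity is exactly what lets the synchronous proof of~\cite{DBLP:journals/corr/abs-2010-03311} carry over, so the real work is to build, for each $n$-ary relation $R\subseteq\{0,1\}^n$, formulae that pin down $S$ relative to $R$.

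First I would fix, for each type $\vec a=(a_1,\dots,a_n)\in\{0,1\}^n$, a propositional \emph{type formula} $\theta_{\vec a}\dfn\bigwedge_{i=1}^n\varphi_i^{a_i}$, where $\varphi_i^1\dfn\varphi_i$ and $\varphi_i^0$ is the negation normal form of $\lnot\varphi_i$ (legitimate because each $\varphi_i$ is propositional). On singletons propositional formulae evaluate flatly, so $(\{t\},\tau)\models\theta_{\vec a}$ iff the type of $t$ equals $\vec a$. Two gadgets then suffice. To say ``$S\subseteq R$'' I use $\alpha_R\dfn\flatop\big(\bigvarovee_{\vec a\in R}\theta_{\vec a}\big)$, asserting that every individual trace realises some type in $R$. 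To say ``$\vec a\in S$'' I use $\beta_{\vec a}\dfn(\theta_{\vec a}\land\NE)\lor\top$, whose strict split isolates a nonempty subteam all of whose traces have type $\vec a$; this holds iff at least one trace of type $\vec a$ occurs. Setting $\chi_R\dfn\alpha_R\land\bigwedge_{\vec a\in R}\beta_{\vec a}$ then expresses $S=R$, and since there are only finitely many $n$-ary Boolean relations, $B$ is finite, so $[\varphi_1,\dots,\varphi_n]_B\equiv\bigvarovee_{R\in B}\chi_R$ is a well-formed $\teamltl(\bor,\NE,\flatop)$-formula.

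For the downward-closed case I would drop the $\NE$-gadget entirely. Since $B$ is downward closed, $S\in B$ holds iff $S\subseteq R$ for some $R\in B$ (``if'' by downward closure, ``only if'' by taking $R=S$), so the atom is equivalent to $\bigvarovee_{R\in B}\alpha_R$, which lives in $\teamltl(\bor,\flatop)$. I would verify the correspondence $S\subseteq R\Leftrightarrow(T,\tau)\models\alpha_R$ directly from the semantics of $\flatop$ and $\bor$.

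The routine parts are the two equivalences just stated; the only genuinely delicate points, which I would check explicitly, are the empty-team boundary cases. When $T=\emptyset$ we have $S=\emptyset$: here $\alpha_R$ holds vacuously while every $\beta_{\vec a}$ fails, so $\chi_R$ forces $S=R$ only when $R=\emptyset$ (noting that $\flatop\bot$ holds exactly on the empty team), and $\bigvarovee_{R\in B}\alpha_R$ holds on $\emptyset$ iff $B\neq\emptyset$, matching $\emptyset\in B$ for downward-closed $B$. I expect this empty-team bookkeeping, together with confirming that the NNF-negation step and the singleton-level flatness are sound, to be the main obstacle; everything else is a direct unfolding of the team-semantic clauses.
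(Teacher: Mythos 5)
Your proposal is correct and takes essentially the same route as the paper: the paper does not reprove this theorem but imports it from \cite{DBLP:journals/corr/abs-2010-03311}, justified precisely by your opening observation that generalised atoms and the connectives $\bor$, $\NE$, $\flatop$ inspect only the current positions $\tau(0,t)$, so the synchronous construction is tef-insensitive and carries over verbatim. Your explicit construction (the type formulae $\theta_{\vec a}$, the containment gadget $\alpha_R$ via $\flatop$, the realisation gadget $\beta_{\vec a}$ via an $\NE$-split, and dropping the latter in the downward-closed case) is the standard argument of that cited source, so you have in effect supplied the proof the paper leaves to citation.
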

\begin{proof}[Proof sketch]
For any $n$-tuple of Booleans $\vec{b}=(b_1,\dots,b_n)\in \{0,1\}^n$ and $n$-tuple of propositional formulae $\vec{\varphi}=(\varphi_1,\dots,\varphi_n)$, define
\[
\vec{\varphi}^{\vec{b}} \dfn \varphi_1^{b_1}\land \dots \land \varphi_n^{b_n},
\]
where $\varphi^{1} \dfn \varphi$ and $\varphi^{0}$ is defined to be the negation normal form formula equivalent with $\neg \varphi$ obtained via De Morgan's laws. It is now straightforward to check that if $B$ is a downward closed set of $n$-ary Boolean relations and $\vec{\varphi}=(\varphi_1,\dots,\varphi_n)$ in an n-tuple of propositional formulae then
\[
(T, \tau, i) \models [\varphi_1,\dots,\varphi_n]_B \quad\text{ iff }\quad (T, \tau, i) \models \bigvarovee_{R\in B} \bigvee_{\vec{b}\in R} \vec{\varphi}^{\vec{b}}.
\]
If $B$ is not assumed to be downward closed, then
\[
(T, \tau, i) \models [\varphi_1,\dots,\varphi_n]_B \quad\text{ iff }\quad (T, \tau, i) \models \bigvarovee_{R\in B} \bigvee_{\vec{b}\in R} (\vec{\varphi}^{\vec{b}}\land \NE).
\]
\end{proof}

\subsection{Expressing properties of \tefs}\label{subsec:tefproperties}
Recall that we deemed some of the properties of \tefs in Table \ref{tbl:tef-props} optional.
This means that these properties are not required for a function to be considered a \tef but instead constitute subclasses of \tefs.
It is a natural question to ask whether \teamltl can express different classes of hyperproperties when considering \tefs with these optional properties.
Here, we show that this is not the case for fairness and synchronicity in some extensions of \teamltl, since these properties of \tefs are definable in these extensions.

Since fairness and synchronicity make assumptions about the progress of a \tef on each trace, we need a way to track this progress for defining these properties.
For this purpose, we introduce a fresh atomic proposition $o$ that is set on exactly every other position on every trace.
We can then use parity induced by $o$ to enforce synchronicity using the fact that each \tef progresses by exactly one time step whenever the valuation flips from $o$ to $\lnot o$ or vice versa.

\paragraph*{Expressing the alternation on $o$.}
In the context of the model checking problem, ensuring the alternation on $o$ is straightforward.
Given a Kripke structure, create two copies of every state of the structure, one labelled with $o$ and another not labelled with $o$.
Then, transitions from each $o$ labelled state take to the copy of the target state not labelled $o$, and transitions from each state not labelled $o$ take to the copy of the target state labelled $o$.
In this new structure, every trace has the property that $o$ holds on exactly every other index. Moreover, after dropping $o$, the two structures are indistinguishable with respect to their traces.
For the satisfiability problem, the valuation of $o$ is not restricted by any structure.
Thus, we construct a formula that is unsatisfiable by a set of traces that does not correctly alternate on $o$.

\paragraph*{Expressing synchronicity.}

In order to express the synchronicity of a \tef, we encode in a formula that $o$ alternates on all traces simultaneously.
That is, for every step, either $o$ holds on all traces and $\lnot o$ holds in the next step, or vice versa:
\[
	\phisynch \dfn \Gm ((o\land\X \lnot o)\bor (\lnot o\land\X o)).
\]
Note that this formula is unsatisfiable by a set of traces violating the alternation property since all subformulae refer to the whole set of traces.
This formula shows how to encode synchronicity using the Boolean disjunction $\bor$.
We can alter the formula a little to also show that synchronicity can be expressed without this extension:
\[
	\phisynch' \dfn o \land \Gm((o\land\X \lnot o)\lor (\lnot o\land\X o)).
\]
In this formula, we make use of a split $\lor$ instead of the Boolean disjunction $\bor$.
However, since we demand that $o$ hold in the first step, the split can only ever be made true by splitting a set of traces $T$ into $T$ and $\emptyset$.
Thus, $\lor$ behaves like $\bor$ in this formula and can replace the undesired connective.
In Section \ref{sec:synchronous}, we make use of this formula to show how synchronous $\teamltl$ can be embedded into different fragments of $\teamctls$.

\paragraph*{Expressing fairness.}

Fairness can be expressed using the universal subteam quantifier $\flatop$.
Our formula states that for every trace, the valuation of $o$ flips infinitely often, and thus the current \tef never stops making progress on this trace.
This is equivalent to the definition of fairness that requires every index on every trace to be reached.
The formula is:
\[
	\phifair \dfn \flatop \Gm((o \land \F \lnot o) \lor (\lnot o \land \F o)).
\]
Note that for this property, we do not need to enforce strict alternation on $o$. 
We only require that the valuation of $o$ alternates after some finite amount of steps.

\paragraph*{Quantifying \tefs with expressible properties.}
The formulae expressing \tef properties can be used to quantify over \tefs with these properties.
For example, quantification over fair \tefs could be implemented in the following way:
if $T$ is a multiset of traces satisfying alternation for $o$, we have that $(T, \tau, i)\models \exists (\phifair \land \varphi)$ iff there exists a fair tef $\tau'$ of $T$ such that $\tau'(j)=\tau(j)$ for all $j\leq i$ and $(T, \tau', i)\models \varphi$.

\subsection{Basic properties of the logics}

$\teamltl$ is a conservative extension of $\LTL$, i.e., their semantics agree on singleton trace sets.
The next proposition follows by a straightforward inductive argument that is almost identical to the corresponding proof for synchronous $\teamltl$ \cite{kmvz18}. The main insight is that the only initial \tef for a singleton team is the synchronous \tef.
\begin{proposition}
For any $\teamltl$-formula $\varphi$, trace $t$, and initial \tef $\tau$ for $\{t\}$, the following holds:
	$(\{t\}, \tau, 0)\models \varphi \text{ iff } (t,0)\Vdash \varphi$.  
\end{proposition}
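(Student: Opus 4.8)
The plan is to reduce the statement to the observation that, on a singleton team, the \tef is \emph{forced} to be the identity, and then run a routine structural induction mirroring the synchronous case. First I would show that $\{t\}$ admits a unique initial \tef, namely $\tau^*$ with $\tau^*(i,t)=i$ for all $i$. Indeed, strict monotonicity applied to the one-element tuple $\tau(i)=(\tau(i,t))$ gives $\tau(i,t)<\tau(i+1,t)$, while stepwiseness gives $\tau(i+1,t)\le\tau(i,t)+1$; together these force $\tau(i+1,t)=\tau(i,t)+1$, and with initiality $\tau(0,t)=0$ an induction on $i$ yields $\tau(i,t)=i$. This is the one ingredient genuinely new relative to the synchronous proof of \cite{kmvz18}: there the identity is guaranteed by synchronicity, here it is guaranteed by the combination of strict monotonicity and stepwiseness on a single trace.

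Because the temporal operators shift the \tef and a shifted \tef is no longer initial, I would prove a slightly stronger claim by induction: for every $\teamltl$-formula $\varphi$ and every $k\in\N$, $(\{t\},\tau^*[k,\infty])\models\varphi$ iff $\eval{\varphi}_{(t,k)}=\true$ (equivalently $t[k,\infty]\Vdash\varphi$). The proposition is then the case $k=0$. The arithmetic fact enabling the induction is that shifts compose on the identity \tef, i.e. $\tau^*[k,\infty][k',\infty]=\tau^*[k+k',\infty]$, since both send $(i,t)$ to $i+k+k'$. For the literals $p$ and $\lnot p$ the definition evaluates the single trace at position $\tau^*[k,\infty](0,t)=k$, matching $\eval{p}_{(t,k)}$, and the conjunction case is immediate from the induction hypothesis.

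For the split disjunction I would first record the standard \emph{empty team property}: $(\emptyset,\tau)\models\chi$ for every $\teamltl$-formula $\chi$, proved by an easy parallel induction (every clause holds vacuously or by splitting $\emptyset=\emptyset\uplus\emptyset$, choosing $k=0$ in the $\U$-clause and the left disjunct in the $\W$-clause). Since the multiset $\{t\}$ has a single element, the only disjoint decompositions $T_1\uplus T_2=\{t\}$ are $(\{t\},\emptyset)$ and $(\emptyset,\{t\})$; using the empty team property the disjunction clause collapses to ``$(\{t\},\tau^*[k,\infty])\models\varphi$ or $(\{t\},\tau^*[k,\infty])\models\psi$'', which by the induction hypothesis is exactly the standard LTL reading of $\varphi\lor\psi$ at position $k$. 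For $\X$ I would use $\tau^*[k,\infty][1,\infty]=\tau^*[k+1,\infty]$; for $\U$ and $\W$ the general shift-composition identity rewrites each $\tau^*[k,\infty][m,\infty]$ as $\tau^*[k+m,\infty]$, after which the team-semantic clauses become, via the induction hypothesis, verbatim the Kripke semantics of until and weak-until at position $k$.

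The argument is essentially routine; the two points requiring care are the reduction to the identity \tef, which is where the specific \tef axioms of our setting enter, and the disjunction case, where one must invoke the empty team property so that the split operator degenerates to ordinary disjunction on a one-trace team. I expect the latter to be the main conceptual obstacle, as it is the only place where genuine team semantics, rather than plain LTL evaluation of a single trace, is in play.
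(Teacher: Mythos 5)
Your proof is correct and follows essentially the approach the paper intends: the paper does not spell out a proof but merely states that the proposition "follows by a straightforward inductive argument that is almost identical to the corresponding proof for synchronous $\teamltl$", and your fleshed-out induction is exactly that argument, with the observation that strict monotonicity and stepwiseness force the unique initial \tef on a singleton team to be the identity supplying the "almost" (the paper itself motivates stipulating strict monotonicity precisely so that $\X$ behaves conservatively). Your handling of the shift composition, the empty-team property, and the collapse of the split disjunction matches what the cited synchronous-case induction does, so there is no divergence worth noting.
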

Note that in the setting of the above proposition $\teamctl$ and $\teamctls$ both collapse to $\LTL$ as well.

Let $\LL$ be a logic and $\models_*\in \{\models_\exists, \models_\forall, \models_s\}$. We say that $(\LL, \models_*)$ is \emph{downward closed} if, for every $\varphi\in \LL$, $T\models_* \varphi$ implies $S\models_* \varphi$ whenever $S\subseteq T$. Likewise, we say that $(\LL, \models_*)$ is \emph{union closed} if, for every $\varphi\in \LL$, $T\uplus S\models_* \varphi$ holds whenever $T\models_* \varphi$ and $S\models_* \varphi$ hold.

It is known \cite{kmvz18} that $(\teamltl, \models_s)$ is not union closed, but satisfies the downward closure property.
For $\forall$TeamLTL, the former property is established  by Proposition~\ref{prop:teamltl-all-not-union-closed} and the latter by Proposition~\ref{prop:teamltl-all-downward-closed} below. 

\begin{proposition}\label{prop:teamltl-all-not-union-closed}
	$(\teamltl, \models_\forall)$ is not union closed.
\end{proposition}
\begin{proof}
	We define a formula $\phi$ and two teams $T_1,T_2$ such that $T_i\models_\forall \phi$, but $T_1\uplus  T_2\not\models_\forall \phi$. 
%
 We let $t=\{p\}\{p\}\emptyset^\omega$, and define $T_1\dfn\{(1,t)\}$ and $T_2\dfn\{(2,t)\}$. 
	It is easy to check that $T_1\models_\forall \X\X \lnot p$, $T_2\models_\forall \X\X \lnot p$ but $T_1 \uplus T_2\not\models_\forall \X\X \lnot p$.	
 The reason behind this is that in \tefs at least one of its traces advances each step of the global clock (see Table~\ref{tbl:tef-props}). 
	In team $T_1$ and $T_2$ this would yield leaving the $p$-labelled prefix while for $T_1 \uplus T_2$ in the first step one trace can advance and in the next the other one can.
\end{proof}

\begin{proposition}\label{prop:teamltl-all-downward-closed}
	$(\teamltl, \models_\forall)$ is downward closed.
\end{proposition}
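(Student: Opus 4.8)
The plan is to reduce downward closure for $\models_\forall$ to two independent facts and then compose them in the right order: a \emph{restriction} (locality) lemma, saying that satisfaction along a \emph{fixed} \tef is inherited by sub-multisets, and an \emph{extension} lemma, saying that every initial \tef of a subteam is the restriction of an initial \tef of the full team. It is exactly the universal quantification over \tefs that allows these two facts to be chained in the correct direction.

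First I would prove, by induction on $\varphi \in \teamltl$, the following restriction lemma: for every stuttering temporal team $(T,\tau)$ and every sub-multiset $S \subseteq T$, if $(T,\tau) \models \varphi$ then $(S,\tau|_S) \models \varphi$, where $\tau|_S$ denotes the restriction of $\tau$ to the traces of $S$. The observation that makes the induction go through is that $\tau|_S$ is again a \emph{stuttering} \tef (monotonicity and stepwiseness are inherited coordinatewise), even though strict monotonicity may be lost — which is precisely why the basic semantics is set up for stuttering temporal teams rather than for \tefs. The literal cases $p, \lnot p$ are immediate, since the defining condition is a universal quantification over the team, which only shrinks; conjunction is routine. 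For $\X$, $\U$ and $\W$ one uses that shifting and restriction commute, i.e.\ $(\tau[k,\infty])|_S = (\tau|_S)[k,\infty]$, so that the same witness $k$ (respectively, the same universal and existential choices in $\U$ and $\W$) transfers from $T$ to $S$ via the induction hypothesis. For the split $\varphi \lor \psi$, a witnessing decomposition $T = T_1 \uplus T_2$ restricts to a decomposition $S = S_1 \uplus S_2$ with $S_i \subseteq T_i$ (each indexed trace of $S$ lies in exactly one $T_i$), and the induction hypothesis applies to each part.

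Second I would prove the extension lemma: if $S \subseteq T$ and $\sigma$ is an initial \tef for $S$, then $\sigma$ extends to an initial \tef $\tau$ for $T$ with $\tau|_S = \sigma$. I set $\tau(i,t) \dfn \sigma(i,t)$ for $t \in S$, and $\tau(i,t) \dfn i$ for $t \in T \setminus S$, so that the freshly added traces simply track the global clock. This $\tau$ is initial and stepwise because $\sigma$ is and the added coordinates advance by exactly one per step; and it is strictly monotonic because either $T \setminus S \neq \emptyset$, in which case an added coordinate advances at every step, or $S = T$ and $\tau = \sigma$ is already strict. Combining the two lemmas then yields the proposition: given $T \models_\forall \varphi$ and $S \subseteq T$, fix an arbitrary initial \tef $\sigma$ of $S$, extend it to an initial \tef $\tau$ of $T$, obtain $(T,\tau) \models \varphi$ from $T \models_\forall \varphi$, and descend by the restriction lemma to $(S,\tau|_S) = (S,\sigma) \models \varphi$; since $\sigma$ was arbitrary, $S \models_\forall \varphi$. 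The main subtlety — and the reason the analogous statement fails for $(\teamltl, \models_\exists)$, as the preceding example shows — is directionality: a strict \tef for $T$ need not restrict to a strict \tef for $S$, but a strict \tef for $S$ always \emph{extends} to one for $T$. Universal satisfaction lets us start from an arbitrary \tef of the smaller team, extend upward into the scope of the hypothesis, and descend by locality, whereas existential satisfaction would require the opposite and leaves us with a merely stuttering restriction.
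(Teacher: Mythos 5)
Your proof is correct and takes essentially the same route as the paper's: an induction on formula structure showing that satisfaction transfers from $(T,\tau')$ to its reduct $(S,\tau)$ (working at the level of stuttering \tefs, with shifting and restriction commuting for the temporal operators and splits restricting componentwise), combined with the fact that every initial \tef of $S$ extends to an initial \tef of $T$, applied in exactly this order under $\models_\forall$. The only difference is one of detail: the paper dismisses the extension step with ``clearly'', whereas you give an explicit construction (new traces tracking the global clock) and verify initiality, stepwiseness, and strict monotonicity.
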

\begin{proof}
	Let $S\subseteq T$ be multisets of traces, $\tau$ be some stuttering \tef for $S$, and $\tau'$ a stuttering \tef for $T$, whose reduct to $S$ is $\tau$. We prove by induction on the structure of formulae that 
	\begin{equation}\label{eq:dc}
	(S, \tau, i)\models \varphi \text{ if } (T,\tau',i)\models \varphi,
	\end{equation}
	for every $\teamltl$-formula $\varphi$ and every $i\in\N$. 
	The cases for atomic formulae and conjunction are trivial. 
	For disjunction $\lor$, it suffices to note that splits of $(S,\tau,i)$ can be always copied from the splits of $(T,\tau',i)$ such that the induction hypothesis can be applied to the respective parts of the splits.
	
	Assume then $(T,\tau',i)\models \X \psi$. 
	Then, by the semantics of $\X$, $(T,\tau',i+1)\models \psi$. 
	Since $\tau$ is the $S$ reduct of $\tau'$, by the induction hypothesis, $(S, \tau, i+1) \models \psi$, from which $(S,\tau,i)\models \X \psi$ follows. 
	The cases for $\U$ and $\W$ are similar.
	
	We are now ready to prove the claim of the proposition. 
	Let $S$ and $T$ be as above, and let $\varphi$ be a $\teamltl$-formula such that $T\models_\forall \varphi$. 
	It suffices to show that $(S, \tau, i)\models \varphi$, for every initial \tef $\tau$ of $S$ and $i\in\N$.
	Let $\tau$ be an arbitrary initial \tef for $S$. 
	Clearly, there exists an initial \tef $\tau'$ of $T$, whose $S$ reduct is $\tau$. 
	Now $(T,\tau',i)\models \varphi$, for $T\models_\forall \varphi$ and $\tau'$ is an initial \tef of $T$. 
	Now $(S,\tau,i)\models \varphi$ follows from \eqref{eq:dc}.
\end{proof}

However, unlike synchronous $\teamltl$, the following proposition shows that $(\teamltl, \models_\exists)$ is not downward closed.
\begin{proposition}\label{prop:not-ex}
	$(\teamltl, \models_\exists)$ is not downward closed.
\end{proposition}
\begin{proof}
    Similar to the counterexample used in the proof of Proposition \ref{prop:teamltl-all-not-union-closed}, let $t=\{p\}\{p\}\emptyset^\omega$ be a given trace, and define two teams $T=\{(1,t), (2,t)\}$ and $S=\{(1,t)\}$ that are multisets of traces. 
	It is easy to check that $T\models_\exists \X\X p$, but $S\not\models_\exists \X\X p$.
\end{proof}
The example in the proof of the proposition above also illustrates that the use of multisets of traces is essential in our logics; otherwise, we would violate \emph{locality}. The locality principle dictates that the satisfaction of a formula with respect to a team should not depend on the truth values of proposition symbols that do not occur in the formula.
For this, consider a variant of this example with $t_1=\set{p,q}\set{p}\emptyset^\omega$, $t_2=\set{p}\set{p}\emptyset^\omega$ and $T=\set{t_1,t_2}$. Then, removing $q$ from the traces would yield the set $S=\set{t_2}$ under non-multiset semantics and thus change the truth value of $\X\X p$ as seen in the proof of Proposition \ref{prop:not-ex}. This shows that the use of multiset semantics is vital.
The fact that multiset semantics can be used to retain locality was observed in \cite{DurandHKMV18}.

\section{Fragments of \NoCaseChange{TeamCTL}\texorpdfstring{$^*$}{*} and synchronous \NoCaseChange{TeamLTL}}\label{sec:synchronous}

In this section, we examine connections between our new temporal team logics and the older synchronous $\teamltl$.

\subsection{\texorpdfstring{\boldmath}{}Satisfiability of \texorpdfstring{$\exists$}{exists}TeamLTL and validity of \texorpdfstring{$\forall$}{forall}TeamLTL\texorpdfstring{\unboldmath}{}}

It is straightforward to check that the satisfiability problem of $\exteamltl$ and the validity problem of $\ateamltl$ are, in fact, equivalent to the corresponding problems of synchronous $\teamltl$. 
\begin{theorem}\label{thm:synchltl-sat-validity}
Any given $\teamltl$-formula is satisfiable in $\exteamltl$ if and only if it is satisfiable in synchronous $\teamltl$. Likewise, a given $\teamltl$-formula is valid in $\ateamltl$ if and only if it is valid in synchronous $\teamltl$.
\end{theorem}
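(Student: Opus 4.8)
The plan is to reduce both equivalences to a single \emph{invariance lemma} that relates an arbitrary temporal team $(T,\tau)$ to a synchronous one, and then read off the four implications. Two of them are immediate from the observation (already noted before the statement) that the initial synchronous \tef $\tau_s$ of a multiset $T$, given by $\tau_s(i,t)=i$, is itself an initial \tef: it satisfies strict monotonicity, stepwiseness and synchronicity. Hence if $T\models_s\varphi$, then $(T,\tau_s)\models\varphi$ witnesses $T\models_\exists\varphi$, so synchronous satisfiability implies $\exists$-satisfiability; dually, $\ateamltl$-validity implies synchronous validity, because for every $T$ the synchronous \tef is among the initial \tefs quantified over by $\models_\forall$, so $T\models_\forall\varphi$ gives $T\models_s\varphi$.

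For the converse directions I would prove the following invariance lemma. Fix a multiset $T$ and a \tef $\tau$ for $T$, and define the \emph{flattened} multiset $T_\tau$ by replacing each element $(j,t)\in T$ with $(j,t_\tau)$, where $t_\tau\in(2^\ap)^\omega$ is the trace $t_\tau[i]\dfn t[\tau(i,t)]$. The point of this definition is a label-coincidence identity: the label observed on $t$ under $\tau$ at global time $i$, namely $t[\tau(i,t)]$, equals the label observed on $t_\tau$ under the synchronous \tef at time $i$, namely $t_\tau[i]$. More generally, under the $k$-shift $\tau[k,\infty]$ the trace $t$ observes $t[\tau(i+k,t)]=t_\tau[i+k]$ at global time $i$, which is exactly what the $k$-shift of the synchronous \tef of $T_\tau$ observes on $t_\tau$.

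I would then establish, by induction on the $\teamltl$-formula $\varphi$, the strengthened claim: for every subteam $S\subseteq T$ and every $k\in\N$, $(S,(\tau|_S)[k,\infty])\models\varphi$ iff $(S_\tau,\sigma_k)\models\varphi$, where $S_\tau=\{(j,t_\tau):(j,t)\in S\}$ and $\sigma_k(i,t_\tau)\dfn i+k$ is the $k$-shift of the synchronous \tef of $S_\tau$. The atomic cases $p,\neg p$ hold by the label-coincidence identity; $\land$ is immediate; for $\lor$ the index-preserving bijection $(j,t)\mapsto(j,t_\tau)$ transports splits $S=S_1\uplus S_2$ to splits $S_\tau=(S_1)_\tau\uplus(S_2)_\tau$ and back, so the two existential quantifications over splits match; and $\X$, $\Um$, $\W$ reduce to the induction hypothesis at shifted indices $k+1$ and $k+m$, using $\sigma_k[m,\infty]=\sigma_{k+m}$ and $(\tau|_S)[k,\infty][m,\infty]=(\tau|_S)[k+m,\infty]$. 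Here one must note that $\tau|_S$ need only be a stuttering \tef for the semantics to apply, since restriction can destroy strict monotonicity but never monotonicity, and the label-coincidence identity uses only monotonicity; likewise the restriction of a synchronous \tef to a subteam is again synchronous, so $\sigma_k$ behaves correctly under splitting.

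Taking $S=T$ and $k=0$ yields $(T,\tau)\models\varphi$ iff $T_\tau\models_s\varphi$, from which the remaining directions follow. If $T\models_\exists\varphi$ via an initial \tef $\tau$, then $T_\tau\models_s\varphi$, so $\varphi$ is synchronously satisfiable; and if $\varphi$ is synchronously valid, then for every $T$ and every initial \tef $\tau$ we have $T_\tau\models_s\varphi$, hence $(T,\tau)\models\varphi$, i.e.\ $T\models_\forall\varphi$, so $\varphi$ is $\ateamltl$-valid. I expect the main obstacle to be the bookkeeping in the $\lor$ case of the induction: one must check that the split-transport is a bijection compatible with the \tef restrictions on both sides, and that the synchronous side genuinely remains synchronous after restriction, so that the same invariance claim can be reapplied inside subteams.
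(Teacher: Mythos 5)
Your proposal is correct and follows essentially the same route as the paper: the paper's own justification consists of exactly your two observations, namely that the synchronous \tef of a multiset is itself an (initial) \tef, and that every \tef $\tau$ on $T$ can be replaced by a flattened multiset $T_\tau$ whose synchronous \tef is indistinguishable from $\tau$ for $\teamltl$-formulae. Your inductive invariance lemma (including the care taken with stuttering restrictions of \tefs to subteams in the $\lor$ case) simply fills in the details the paper declares straightforward.
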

\begin{proof}
	Let $\varphi$ be a \teamltl formula.
	First, assume that $\varphi$ is satisfiable in $\exists\teamltl$, i.e., that there is a multiset of traces $T$ such that $T \models_{\exists} \varphi$.
	This means $(T, \tau, 0) \models \varphi$ for some \tef $\tau$.
	Now consider the multiset of traces $T_\tau$ containing for each trace $t \in T$ a trace $t_\tau$ with $t_\tau[i] = t[\tau(i,t)]$ for every $i \in \N$.
	Then, $(T_\tau,\tau_s,0) \models \varphi$ for the synchronous \tef $\tau_s$, which implies that $T_\tau \models_s \varphi$.
	Thus, $\varphi$ is satisfiable in synchronous \teamltl.
	
	Next, assume that $\varphi$ is satisfiable in synchronous \teamltl, i.e., that there is a multiset of traces $T$ such that $T \models_s \varphi$.
	This means that $(T,\tau_s,0) \models \varphi$ for the synchronous \tef $\tau_s$ which in particular implies that $T \models_{\exists} \varphi$ since the synchronous \tef $\tau_s$ is a \tef itself.
	Thus, $\varphi$ is satisfiable in $\exists\teamltl$.
	
	The first argument can also be applied to show that the assumption that $\varphi$ is not valid in $\forall\teamltl$, i.e., there is a multiset of traces $T$ and a tef $\tau$ such that $(T,\tau,0) \not\models \varphi$, implies that $\varphi$ is not valid in synchronous \teamltl.
	Likewise, the second argument can be used to show that the assumption that $\varphi$ is not valid in synchronous \teamltl implies that $\varphi$ is not valid in $\forall\teamltl$.
\end{proof}

In the following section, we establish that the connection between synchronous $\teamltl$ and $\exteamltl$/$\ateamltl$ is more profound than just a connection between the problems of satisfiability and validity.
We show how model checking of extensions of synchronous $\teamltl$ can be efficiently embedded into $\exteamltl$ and $\ateamltl$.
These results imply that the model checking problem of extensions of $\exteamltl$ and $\ateamltl$ are at least as hard as the corresponding problem for synchronous $\teamltl$. The same holds for the validity problem of $\exteamltl$-extensions and for the satisfiability problem of $\ateamltl$-extensions. However, we conjecture that the latter problems for $\exteamltl$ and $\ateamltl$ are harder than for synchronous $\teamltl$, due to the alternation of quantification (between multisets of traces and \tefs) that is taking place.

\subsection{\texorpdfstring{\boldmath}{}Simulating synchronous TeamLTL with fragments of TeamCTL\texorpdfstring{$^*$}{\unboldmath}}\label{sec:synchTeamLTL-embedding}

We show how to embed synchronous $\teamltl$ into extensions of different subfragments of $\teamctls$: $\exteamltl$, $\ateamltl$, $\exteamctl$ and $\ateamctl$.
This is done by using and expanding on the idea from Subsection \ref{subsec:tefproperties} to use a proposition $o$ with alternating truth values on all traces of a team to track progress. 
We define translations $\varphi\mapsto\varphi^+$ and $\varphi\mapsto\varphi^-$ from synchronous $\teamltl$ to fragments of $\teamctls$ that are used in the embeddings. 
The translations are designed such that $T\models_s \varphi$ if and only if $T_o\models_{\exists} \varphi^+$ (resp. $T_o\models_{\forall} \varphi^-$), where $T_o$ is obtained from $T$ by introducing a fresh alternating proposition $o$. 
Some of the translations additionally preserve satisfiability, i.e., $\varphi$ is satisfiable if and only if $\varphi^+$ (resp. $\varphi^-$) is.

Let us now formalise our results a bit more.
Given a set of traces $T$ over $\ap$, let $T_o$ for $o \notin \ap$ be the set of traces $\{\,t \mid t\upharpoonright_{\ap} \in T \text{ and } o \in t[i] \text{ iff } i \mod 2 = 0\,\}$  over $\ap\uplus\{o\}$.
Here, we use $t\upharpoonright_{\ap}$ to denote the restriction of $t$ to $\ap$.
\begin{lemma}\label{lem:synchTeamLTL-to-fragments}
	Given a synchronous $\teamltl$ formula $\varphi$, one can construct in time linear in $|\varphi|$  a formula $\varphi^+$ in $\exteamltl$ (resp., $\exteamctl($$\bor$$)$) and $\varphi^-$ in $\ateamltl($$\bor$$,\NE)$ \mbox{(resp., $\ateamctl(\subseteq)$)} such that for all multisets of traces $T$:
\[
	T \models_s \varphi \text{ iff } T_o \models_\exists \varphi^+ \quad\text{and}\quad T \models_s \varphi \text{ iff } T_o \models_\forall \varphi^-.
\]
\end{lemma}

\begin{proof}
First, we provide an embedding into $\exteamltl$ and $\ateamltl$.  
For $\exteamltl$, we use the formula $\phisynch'$ from Subsection \ref{subsec:tefproperties}.
It ensures that the existentially quantified \tef is synchronous and therefore progresses on a set of traces in the same way a set of traces would make progress in the synchronous setting.
We translate a synchronous $\teamltl$ formula $\varphi$ into $\exteamltl$ in the following way:
$\varphi^+ := \varphi \land \phisynch'$.

For $\ateamltl$, we use a dual approach.
Rather than identifying a synchronous \tef, we instead eliminate all non-synchro\-nous ones.
We make use of the formula
\[	
 \phioff \dfn \big((\NE \land o\land \X o) \lor \top\big) \bor \big((\NE \land \neg o\land \X \neg o) \lor \top\big).
 \]
The formula $\F \phioff$ expresses that in the current \tef, there is a \textit{defect} where some of the traces in the set of traces do not move for one step.
Using this, we can rule out all non-synchronous \tefs from the universal quantifier.
Our translation is:
$\varphi^- := \varphi \lor \F \phioff$.  

Now we consider the embedding into $\exteamctl$ and $\ateamctl$. For an embedding into $\teamctl$, we have to expand on the ideas used for $\teamltl$ further.
We start with the embedding into $\exteamctl$.
Compared to the embedding into $\exteamltl$ where the time evaluation function is constant throughout the formula and thus can be checked for synchronicity via $\phisynch'$, we have to deal with newly quantified time evaluation functions for each operator in the embedding into $\exteamctl$.
This is done by enforcing synchronicity in the translation of every operator.
In particular, we define the translation $\varphi^+$ inductively such that all modalities in a formula are replaced and atomic propositions and Boolean connectives are kept unchanged.
For the non-trivial cases, $\varphi^+$ is defined as follows:
{\allowdisplaybreaks
\begin{align*}
	(\F \varphi)^+ &\dfn [\dep(o)\Um_\exists (\varphi^+ \land\dep(o))],\\
	(\Gm \varphi)^+ &\dfn [(\varphi^+\land\dep(o))\W_\exists \bot],\\
	(\varphi\Um\psi)^+ &\dfn [(\varphi^+\land\dep(o))\Um_\exists(\psi^+\land \dep(o))],\\
	(\X \varphi)^+ &\dfn \X_\exists \big(\dep(o) \land \varphi^+\big),\\
	(\varphi\W\psi)^+ &\dfn [(\varphi^+\land\dep(o))\W_\exists(\psi^+ \land \dep(o))].
\end{align*} 
}%
This yields a translation from a synchronous $\teamltl$ formula $\varphi$ into $\exteamctl(\bor)$ (note that dependence atoms can be defined using $\bor$).

For the embedding into $\ateamctl(\subseteq)$, we use the same ideas as for the embedding into $\exteamctl(\bor)$.
The only difference here is that we have to make use of the universally quantified instead of the existentially quantified modalities in the inductive definition of $\varphi^-$.
For the non-trivial cases, $\varphi^-$ is given as follows:
\begin{align*}
	(\F \varphi)^- &\dfn [\top\Um_\forall (\varphi^- \lor o \subseteq \neg o)]\\
	(\Gm \varphi)^- &\dfn [\varphi^-\W_\forall o \subseteq \neg o]\\
	(\X \varphi)^- &\dfn \X_\forall \big(o \subseteq \neg o \lor \varphi^-\big)  \\
	(\varphi\Um\psi)^- &\dfn [\varphi^-\Um_\forall(\psi^-\lor o \subseteq \neg o)]\\
	(\varphi\W\psi)^- &\dfn [\varphi^-\W_\forall(\psi^- \lor o \subseteq \neg o)]
\end{align*} 
In each of these cases, it is straightforward to see from the arguments presented in the translation that
\[
T \models_s \varphi \text{ iff } T_o \models_\exists \varphi^+ \quad\text{and}\quad T \models_s \varphi \text{ iff } T_o \models_\forall \varphi^-.\qedhere
\]
\end{proof}
Apart from the previous translation and the corresponding theorem, we also make use of the synchronous $\exteamctl$ modalities in the proof of Theorem~\ref{thm:undecidable}.
There, we use $[\varphi \Um_\sigma \psi]$ for $[(\varphi \land \dep(o)) \Um_\exists (\psi \land \dep(o))]$, $\X_\sigma \varphi$ for $\X_\exists (\varphi \land \dep(o))$ etc.

Lemma~\ref{lem:synchTeamLTL-to-fragments} can be used to reduce the model checking and satisfiability problems for synchronous \teamltl to the corresponding problems for different variants of \teamltl and \teamctl.
\begin{theorem}\label{thm:synchltl-embed}
	Model checking for synchronous $\teamltl$ can be reduced in linear time in the given formula length and model to model checking for $\exteamltl$, $\ateamltl($$\bor$,$\NE)$, $\exteamctl($$\bor$$)$ and $\ateamctl(\subseteq)$.
\end{theorem}
\begin{proof}
	The construction sketched in Subsection \ref{subsec:tefproperties} can be used to transform a Kripke structure generating a multiset of traces $T$ into a Kripke structure generating the multiset of traces $T_o$.
	Then, we immediately obtain the theorem using Lemma~\ref{lem:synchTeamLTL-to-fragments}.
\end{proof}
\begin{theorem}\label{thm:synchltl-sat-embed}
	Satisfiability for synchronous $\teamltl$ can be reduced in linear time in the given formula length to satisfiability $\exteamctl($$\bor$$)$ and $\ateamctl(\subseteq)$.
	Assuming fairness of \tefs, this also holds for $\exteamctl(\dep)$.
\end{theorem}
\begin{proof}
	Compared to the proof of Theorem \ref{thm:synchltl-embed}, this proof is not as direct.
	Since we cannot enforce the parity condition on $o$ using a Kripke structure, we have to make sure that the formula in the translation is unsatisfiable by multisets of traces not satisfying the condition.

	For $\exteamctl$ and $\ateamctl$, we can enforce the parity condition by adding an additional conjunct to the translation.
	Consider the following formulae:
	\begin{align*}
		\psisynch &\dfn (o\land\X_\exists \lnot o)\bor (\lnot o\land\X_\exists o),\\
		\allowbreak
		\psisynch' &\dfn \big(o\land\X_\forall (\lnot o\lor o\subseteq \neg o) \big)\lor \big(\lnot o\land\X_\forall (o \lor o\subseteq \neg o )\big).
	\end{align*}
	\noindent By using these formulae, we can express the alternation on $o$ using $\teamctl$ modalities. The formula $\Gm_\exists \psisynch$ expresses this property directly.
	The formula $o\land \Gm_\forall \psisynch'$ is a variant of this formula that uses $\subseteq$ instead of $\bor$ and universal quantification instead of existential quantification.
	Thus, we have a formula that expresses the alternation of $o$ in $\ateamctl(\subseteq)$ and $\exteamctl(\bor)$.
	Additionally, there is a variant that assumes fairness for \tefs.
	Consider the following $\exists\teamctl$ formula:
	\begin{align*}
		\psisynch'' &\dfn (o\land\X_\exists \lnot o)\lor (\lnot o\land\X_\exists o)
	\end{align*}
	If we impose fairness for time evaluation functions, the formula $o\land \Gm_\exists \psisynch''$ yields the same effect as $\Gm_\exists \psisynch$ and $o\land \Gm_\forall \psisynch'$.

	We translate a synchronous TeamLTL formula $\varphi$ as follows:
	\begin{align*}
		&\exteamctl(\bor) \qquad &\varphi\mapsto \varphi^+\land o \land \Gm_\exists\psisynch \\
		&\exteamctl(\dep) \qquad &\varphi\mapsto \varphi^+\land o \land \Gm_\exists\psisynch'' \\
		&\ateamctl(\subseteq) \qquad &\varphi\mapsto \varphi^-\land o \land \Gm_\forall\psisynch'
	\end{align*}
	where $\varphi^+$ and $\varphi^-$ are defined as in Lemma~\ref{lem:synchTeamLTL-to-fragments}.
	Using the arguments presented above, it is straightforward to see that $\varphi$ is satisfiable in synchronous TeamLTL if and only if its translation is satisfiable in the respective variant of $\teamctl$.
\end{proof}
Note that Theorem \ref{thm:synchltl-sat-embed} does not include a reduction for $\exteamltl$ or $\ateamltl$.
For $\exteamltl$, this is due to the fact that Theorem~\ref{thm:synchltl-sat-validity} already gives us a stronger result than the reduction underlying Lemma~\ref{lem:synchTeamLTL-to-fragments}.
For \ateamltl, it proves to be difficult to make the formula unsatisfiable by sets of traces not satisfying the parity condition.
This is because the formula has to behave correctly for all \tefs, and thus, to establish the alternation on $o$, no assumption about a \tef's progress can be made.

Using Theorem~\ref{thm:synchltl-embed} and the $\Sigma^0_1$-hardness of the model checking problem for synchronous $\teamltl(\varovee,\subseteq)$ \cite[Theorem~2]{DBLP:journals/corr/abs-2010-03311}, we obtain the following undecidability result.

\begin{corollary}
	The model checking problem for $\exteamctl(\varovee,\subseteq)$ is $\Sigma^0_1$-hard.
\end{corollary}

\section{\NoCaseChange{TeamCTL}\texorpdfstring{$(\varovee)$}{(disjunction)} is highly undecidable}

We show how to obtain high undecidability by encoding recurrent computations of \emph{non-deterministic 2-counter machines (N2C)}.	
A non-deterministic 2-counter machine $M$ consists of a list $I$ of $n$ instructions that manipulate two counters ($\ell$eft and $r$ight) $C_\ell$ and $C_r$. All instructions are in one of the following three forms:
\begin{align*}
	&C_a^+\ \textbf{goto}\ \{j, j'\},\\
        &C_a^-\ \textbf{goto}\ \{j, j'\},\\
	&\textbf{if}\ C_a=0\ \textbf{goto}\  j\ \textbf{else\ goto}\  j',
\end{align*}
where $a\in\{\ell,r\}$, $0\leq j,j' < n$. A \emph{configuration} is a tuple $(i,j,k)$, where $0\leq i < n$ is the next instruction to be executed, and $j,k\in \N$ are the current values of the counters $C_\ell$ and $C_r$. 
The execution of the instruction $i\colon C_a^+\ \textbf{goto}\ \{j, j'\}$ ($i\colon C_a^-\ \textbf{goto}\ \{j, j'\}$, resp.) increments (decrements, resp.) the value of the counter $C_a$ by $1$. 
The next instruction is selected nondeterministically from the set $\{j, j'\}$. 
The instruction $i\colon \textbf{if } C_a=0\ \textbf{goto}\ j,\ \textbf{else\ goto}\ j'$ checks whether the value of the counter $C_a$ is currently $0$ and proceeds to the next instruction accordingly. 
The \emph{consecution relation} of configurations is defined as usual. 
A \emph{computation} is an infinite sequence of consecutive configurations starting from the initial configuration $(0,0,0)$. A computation is \emph{$b$-recurring} if the instruction labelled $b$ occurs infinitely often in it.
\begin{theorem}[\cite{AlurH94}]\label{thm:countermachine}
Deciding whether a given nondeterminis\-tic $2$-counter machine has a $b$-recurring computation for a given label $b$ is $\Sigma_1^1$-complete.
\end{theorem}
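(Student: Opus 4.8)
The plan is to prove $\Sigma^1_1$-completeness in the usual two halves: first exhibit a $\Sigma^1_1$ description of the recurrence problem (membership), then reduce a known $\Sigma^1_1$-complete problem to it (hardness).

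For membership, I would encode a computation of the machine $M$ as a function $c\colon \N\to\N^3$ assigning to each time step the configuration reached. The statement ``$M$ has a $b$-recurring computation'' then reads: there exists such a function $c$ (a single existential second-order function quantifier) satisfying (i) $c(0)=(0,0,0)$, (ii) for every $n\in\N$, the pair $(c(n),c(n+1))$ lies in the consecution relation of $M$, and (iii) $\forall i\in\N\,\exists j\ge i$ such that the instruction-index component of $c(j)$ equals $b$. Conditions (i)--(iii) are all arithmetical in $c$ (the consecution relation is a fixed predicate determined by the finite instruction list $I$, and the recurrence clause is merely $\Pi^0_2$), so the whole sentence has the form $\exists c\,\psi(c)$ with $\psi$ arithmetical, i.e.\ it is $\Sigma^1_1$.

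For hardness, I would reduce from the recurrence problem for non-deterministic Turing machines: given an NTM $N$ and a distinguished state $q_b$, does $N$ admit an infinite run on blank input that enters $q_b$ infinitely often? This B\"uchi-type problem is $\Sigma^1_1$-complete, since a run is an infinite object whose existence is an existential function quantifier and whose correctness together with the recurrence condition is arithmetical. Given $N$, I would apply Minsky's construction to simulate it by a two-counter machine $M$: first simulate $N$ by a multi-counter machine (the tape contents to the left and right of the head stored as two integers, the finite control of $N$ realised by the instruction labels of the counter machine), then collapse the counters to two via the prime-power G\"odel encoding $2^{n_1}3^{n_2}\cdots$, using the second counter as scratch space. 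The nondeterministic branching of $N$ is carried directly by the $\textbf{goto}\ \{j,j'\}$ instructions, and I would choose the label $b$ of $M$ to be the (unique) instruction entered exactly when the simulation begins its handling of a step of $N$ taken from state $q_b$.

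The main obstacle is making this simulation preserve the \emph{recurrence} condition rather than merely reachability behaviour. Two points need care. First, since each step of $N$ is simulated by a bounded-but-variable number of counter-machine steps, I must ensure that the marker instruction $b$ fires once per visit of $N$ to $q_b$ and never spuriously, so that $M$ is $b$-recurring iff $N$ visits $q_b$ infinitely often. Second, I must verify that every infinite computation of $M$ actually tracks a legal run of $N$---that the G\"odel encoding of the tape is an invariant maintained across the macro-steps and that $M$ cannot ``cheat'' by entering an unintended infinite loop inside the simulation of a single $N$-step. Establishing this invariant, and checking that the two directions of the equivalence line up exactly, is where essentially all the work lies.
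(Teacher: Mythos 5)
There is nothing in the paper to compare your argument against: Theorem~\ref{thm:countermachine} is not proved in the paper at all, but imported by citation from Alur and Henzinger \cite{AlurH94}. Judged on its own merits, your reconstruction follows the classical route, which is essentially the route taken in the cited literature: $\Sigma^1_1$ membership by coding a computation as a function $c\colon \N\to\N^3$ and observing that initiality, consecution, and $b$-recurrence are arithmetical in $c$ (this half is fine as stated); $\Sigma^1_1$-hardness by a recurrence-preserving Minsky simulation of a nondeterministic Turing machine. You also correctly isolate the delicate point of the reduction, namely that the simulation must translate ``$q_b$ is visited infinitely often'' into ``instruction $b$ is executed infinitely often'' without spurious firings and without the possibility of a non-terminating or cheating macro-step.

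Two caveats. First, your hardness half rests on the $\Sigma^1_1$-completeness of the recurrence problem for nondeterministic Turing machines, but you justify only its \emph{membership} in $\Sigma^1_1$; the hardness of that base problem is Harel's classical result (proved by building a machine that step-by-step guesses a function $f$ witnessing a $\Sigma^1_1$ sentence $\exists f\,\forall n\,R(f(0),\dots,f(n))$ and re-enters a designated checking state once for each verified $n$), and it is exactly where the genuine $\Sigma^1_1$ content of the whole theorem lies. So your proof, like the paper, ultimately outsources the core difficulty to a known result; that is legitimate, but it should be flagged as a cited ingredient rather than treated as self-evident. Second, a minor slip: under the G\"odel (prime-power) encoding, a single Turing-machine step is simulated by a number of counter-machine steps that is finite but \emph{unbounded} (it grows with the current counter values), not ``bounded-but-variable''; this does not damage the argument, since preservation of recurrence only needs each macro-step to terminate, but the phrasing should be corrected.
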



We reduce the existence of a $b$-recurring computation of a given N2C machine $M$ and an instruction label $b$ to the model checking problem of $\exteamctl($$\bor$$)$. 

\begin{theorem}\label{thm:undecidable}
	Model checking for $\exteamctl($$\varovee$$)$ is $\Sigma_1^1$-hard.
\end{theorem}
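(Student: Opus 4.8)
The plan is to reduce from the $\Sigma_1^1$-complete recurrence problem for non-deterministic 2-counter machines (Theorem~\ref{thm:countermachine}) to model checking of $\exteamctl(\varovee)$. I would fix a N2C machine $M$ with instruction set $I$ and a target label $b$, and construct a Kripke structure $\KM_M$ together with an $\exteamctl(\varovee)$ formula $\Phi_M$ such that $\traces(\KM_M)\models_\exists\Phi_M$ iff $M$ has a $b$-recurring computation. The central design decision is how a single infinite computation of $M$ is to be represented inside a team of traces. The natural encoding is to let each \emph{configuration} $(i,j,k)$ of the computation be witnessed by the traces of the team, using the asynchronous progress of the tef to align the configurations along the global clock: at each step of the global clock, exactly the traces encoding the ``current'' configuration are examined, and the existentially quantified tef is forced (via the synchronous modalities $\X_\sigma,\Um_\sigma$ built from $\dep(o)$ introduced just before the statement) to advance the traces in lockstep so that they jointly spell out the sequence of configurations.

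The key steps, in order, are as follows. First I would build $\KM_M$ whose traces can carry, via propositions in $\ap$, the instruction counter $i$ and local markers for the two counters $C_\ell,C_r$; I would include the alternating proposition $o$ directly in the structure (as sketched in Subsection~\ref{subsec:tefproperties}) so that synchronicity of the quantified tef is enforceable by the formulae $\Gm_\exists\psisynch$ and the $\dep(o)$ guards. Second, I would write $\Phi_M$ as a conjunction of ``local correctness'' subformulae. The counter values themselves cannot be stored in a single trace bounded by $\KM_M$, so the trick is to represent the magnitude of a counter by the \emph{number of traces} in a subteam carrying a given marker, and to use the Boolean disjunction $\varovee$ together with the split $\lor$ to partition the team and compare cardinalities across successive configurations. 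Increment/decrement instructions then become statements relating the size of a marked subteam before and after a $\X_\sigma$ step, expressible because $\varovee$ lets us assert disjunctive cardinality constraints and $\lor$ lets us carve off the single trace that is added or removed. Third, the zero-test instruction $\textbf{if }C_a=0$ becomes an emptiness check on the corresponding marked subteam, which is exactly what $\NE$/its negation expresses; the control flow $\textbf{goto}\ \{j,j'\}$ is handled by the existential quantification over tefs, which chooses one of the nondeterministic successors. Fourth, the $b$-recurrence condition is encoded by a global ``infinitely often'' requirement on the instruction-$b$ marker, written with $\Gm_\exists\F_\sigma(\cdots)$ in the synchronous fragment so that recurrence is genuinely read off along the global clock.

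The main obstacle, as usual in counter-machine encodings, is faithfully enforcing the \emph{consecution relation} between consecutive configurations --- in particular guaranteeing that a counter really changes by exactly one and that all \emph{other} data is copied verbatim from one global step to the next. In the team-semantics setting this ``copy'' discipline is delicate: the split $\lor$ and $\varovee$ can partition and recombine the team in unintended ways, so the bulk of the technical work is to pin down the subteams tightly enough that no spurious trace multiplicities sneak in between steps. I would handle this by threading the alternating proposition $o$ through every modality (using the $\X_\sigma,\Um_\sigma$ abbreviations and the $\dep(o)$ atoms, which are definable from $\varovee$ by Theorem~\ref{thm:GA}) so that the quantified tefs are forced to be synchronous and advance all traces together, thereby ruling out the asynchronous ``slippage'' that would otherwise let the encoding cheat. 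The correctness proof then splits into the two implications: from a genuine $b$-recurring computation one constructs the witnessing team and synchronous tef directly, and conversely any team and tef satisfying $\Phi_M$ is shown, by induction along the global clock, to trace out a legal $b$-recurring computation of $M$.
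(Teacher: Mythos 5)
Your high-level plan (reduce from $b$-recurring N2C computations, Theorem~\ref{thm:countermachine}) matches the paper, but your central encoding decision does not work, and it is in fact the opposite of what the paper does. You propose to represent the magnitude of a counter by the \emph{number of traces} in a marked subteam and to compare these cardinalities across consecutive configurations using $\lor$ and $\varovee$. The logic $\exteamctl(\varovee)$ has no mechanism for this: the split $\lor$ existentially partitions a team inside a single evaluation, subteams created by a split cannot be compared in size with one another or with subteams arising at a later global time step, and neither $\varovee$ nor the atoms definable from it (such as $\dep$) are sensitive to exact multiplicities. ``Subteam $A$ has exactly one more trace than subteam $B$'' is simply not expressible, so the increment/decrement instructions cannot be enforced. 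Your zero test compounds the problem: you invoke $\NE$, which is not part of $\exteamctl(\varovee)$ --- the theorem is precisely about the logic with only $\varovee$ added. Also, your motivating premise (``counter values cannot be stored in a single trace bounded by $\KM_M$'') is false: the paper stores an unbounded counter value in unary \emph{inside} a trace, as the number of $c$-labelled states between consecutive $\#$-markers, which is possible because traces are infinite even though the Kripke structure is finite.

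The second structural problem is your treatment of asynchronicity. You thread $\dep(o)$ through every modality to force the quantified \tefs to be fully synchronous, viewing asynchronous ``slippage'' as something the encoding must suppress. In the paper's proof, asynchronicity is the essential resource (as Table~\ref{tbl:overview} notes, the $\Sigma_1^1$-hardness ``truly relies'' on it): each counter $C_s$ is represented by a \emph{pair} of traces $t_{s,1},t_{s,2}$ which the formula $\varphi_{\text{comp}}$ deliberately desynchronises by exactly one $\#$-interval using $\Um_\exists$, so that $t_{s,2}$ carries the current counter value and $t_{s,1}$ the next one; the synchronous modalities $\Um_\sigma,\X_\sigma$ (built from $\dep(o)$, which is definable from $\varovee$) are then used to match the $c$-blocks of the two traces position by position and certify that they differ by exactly one $c$. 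Forcing global synchronicity destroys exactly this comparison mechanism, and it also contradicts your own proposal to let the existential \tef choose the nondeterministic $\textbf{goto}$ successor: under a forced-synchronous \tef there is nothing left to choose, and indeed the paper resolves the $\textbf{goto}\ \{j,j'\}$ nondeterminism with the Boolean disjunction $j\varovee j'$, not with the \tef. So the gap is not a missing detail but the core of the reduction: you need the paper's intra-trace unary encoding plus controlled desynchronisation of trace pairs, rather than subteam-cardinality arithmetic, which this logic cannot perform.
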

\begin{proof}
		Let $I$ be a given set of instructions of a $2$-counter machine $M$ with the set of labels $\mathbb I\coloneqq\{i_1,\dots,i_n\}$ for $n\in\mathbb N$, and an instruction label $b\in\mathbb I$. 
		We construct a $\teamctl($$\varovee$$)$-formula $\varphi_{I,b}$ and a Kripke structure $\KM_I$ such that
\begin{equation}\label{thm:undecidable_eq1}
	\traces(\KM_I) \models_\exists \varphi_{I,b}
	\text{ iff $M$ has a $b$-recurring computation.}
\end{equation}
	\begin{figure*}[t]
		\centering
		\includegraphics[width=.7\linewidth]{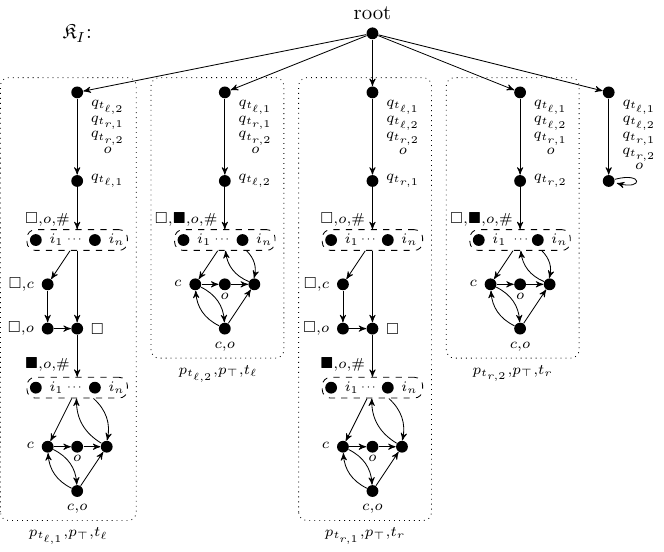}
		\caption{Kripke structure $\KM_I$ used in proof of Theorem~\ref{thm:undecidable}. Dotted boxes mean that the propositions below them are labelled in every state within the box. Dashed boxes with states having labels $i_j$ for $1\leq j\leq n$ simplify the presentation as follows: the incoming/outgoing edges to these boxes are connected with every vertex in the dashed box; the propositions ($\Box,o,\#,\blacksquare$) above dashed boxes are labelled at every state in the box. We call the trace generated via the rightmost part $\pi_d$.}\label{fig:KMI-N2C}
	\end{figure*}
From $\KM_I$ one can obtain all sequences of configurations (even those which are not consecutive computations) for the machine $M$. 
The formula $\varphi_{I,b}$ then allow us to pick some particular traces generated from the structure. 
This essentially corresponds to an existential quantification of the computation. 
The Kripke structure $\KM_I$ is depicted in Figure~\ref{fig:KMI-N2C}.

	Intuitively, the structure is partitioned into five parts.
	The two left-most parts (see Figure ~\ref{fig:KMI-N2C}) encode values of the left counter, the two following parts encode values of the right counter, and the right-most part is a ``dummy trace'' $\pi_d$ (we come to an explanation for $\pi_d$ a bit later). 
    Every trace that is different from $\pi_d$ has a proposition $p_\top$ labelled in every state after the root. 
	We are using four \emph{types} of traces: two trace types $t_{\ell,1},t_{\ell,2}$ (the first and second from left in Figure ~\ref{fig:KMI-N2C} for counter $C_\ell$ and two trace types $t_{r,1},t_{r,2}$ (the third and fourth from left in Figure~\ref{fig:KMI-N2C}) for counter $C_r$. 
	Intuitively, a trace of type $t_{s,2}$ encodes the value for the counter $C_s$ during the whole computation, for $s\in\{r,\ell\}$. See Figure \ref{fig:trace-counter} for an example on how the value of the counter is encoded in a trace by the numbers of points labelled by $c$ between consecutive points labelled by $\#$.
	Our construction ensures that a selected trace of type $t_{s,1}$ is always one step ahead of the selected trace of type $t_{s,2}$. 
	We enforce that they have the same $c$ labelling and therefore the counter value is carried from one $\#$ position to the next (subject to an increment/decrement operation). 
	Such trace pairs are also called $\ell$-traces (or $r$-traces, respectively) and globally have a proposition $t_\ell$ (resp., $t_r$) labelled in their non-root states. 
	Each such trace type $t\in\mathbb T\coloneqq\{t_{\ell,1},t_{\ell,2},t_{r,1},t_{r,2}\}$ also has a proposition $p_t$ that is globally true everywhere (except in the root) and another proposition $q_t$ that is true in the second state while the other three $q_{t'}$ for $t'\in\mathbb T\setminus\{t\}$ are true in the first state (this is used for identification purposes).
	These trace pairs are used to simulate incrementing, resp., decrementing the value of the respective counter.
	A value $m\in\mathbb N$ of a counter is encoded via a $\#$-symbol divided sequence of states in a trace where $m$ states contain a proposition $c$. Consecutive values of a counter is encoded in consecutive $\#$-symbol divided sequences.
	This is depicted in Figure~\ref{fig:trace-counter}.

\begin{figure*}[t]\centering\resizebox{1\linewidth}{!}{%
		\begin{tikzpicture}[
		c/.style={label={90:$c$},circle,fill=black,inner sep=1mm},
		c2/.style={label={270:$c$},circle,fill=black,inner sep=1mm},
		c-o/.style={label={90:$c,o$},circle,fill=black,inner sep=1mm},
		c-o2/.style={label={270:$c,o$},circle,fill=black,inner sep=1mm},
		s1/.style={label={90:$\#,o$},circle,fill=black,inner sep=1mm},
		s2/.style={label={270:$\#,o$},circle,fill=black,inner sep=1mm},
		empty/.style={label={90:},circle,fill=black,inner sep=1mm},
		empty-o/.style={label={90:$o$},circle,fill=black,inner sep=1mm},
		empty-o2/.style={label={270:$o$},circle,fill=black,inner sep=1mm}]

		\begin{scope}[xshift=-2cm]

		\node[anchor=east] at (-5.5,0) {$t_{\ell,2}$:};
		\foreach \x in {-5,9}{
			\node[] (a\x) at (\x,0) {$\dots$};
		}
		\foreach \x in {-4,2,8}{
			\node[s1] (a\x) at (\x,0) {};
		}
		
		\foreach \x in {1,7}{
			\node[empty] (a\x) at (\x,0) {};
		}
		
		\foreach \x in {0}{
			\node[empty-o] (a\x) at (\x,0) {};
		}
		
		\foreach \x in {-3,-1,3,5}{
			\node[c] (a\x) at (\x,0) {};
		}

		\foreach \x in {-2,4,6}{
			\node[c-o] (a\x) at (\x,0) {};
		}

		\foreach \f in {-5,...,8}{
			\pgfmathsetmacro{\t}{int(round(\f+1)}
			
			\path[-stealth'] (a\f) edge (a\t);
		}
			
		\end{scope}
		
		\node[anchor=east] at (-5.5,-1) {$t_{\ell,1}$:};
		\foreach \x in {6,-5,11}{
			\node[] (b\x) at (\x,-1) {$\dots$};
		}
		\foreach \x in {-4,2,10}{
			\node[s2] (b\x) at (\x,-1) {};
		}

		\foreach \x in {1,9}{
			\node[empty] (b\x) at (\x,-1) {};
		}
		
		\foreach \x in {8}{
			\node[empty-o2] (b\x) at (\x,-1) {};
		}
		
		\foreach \x in {-3,-1,3,5,7}{
			\node[c2] (b\x) at (\x,-1) {};
		}

		\foreach \x in {-2,0,4,6}{
			\node[c-o2] (b\x) at (\x,-1) {};
		}

		\foreach \f in {-5,...,10}{
			\pgfmathsetmacro{\t}{int(round(\f+1)}
			
			\path[-stealth'] (b\f) edge (b\t);
		}

		\foreach \f/\t in {-4,...,2}{
			\pgfmathparse{int(round(\t+6))}
			\path[-, dashed, gray] (b\f) edge (a\pgfmathresult);
		}
		
		\end{tikzpicture}}
		\caption{Simulation of a N2C via traces as used in the proof of Theorem~\ref{thm:undecidable}. For presentation reasons, some of the labelled propositions have been omitted. The excerpt of the traces shows that the counter $C_\ell$ is incremented first from 3 to 4 and then to 5. Note that $t_{\ell,2}$ encodes the current value for the counter $C_\ell$. The dashed lines denote which worlds of the respective trace stay in synch.}\label{fig:trace-counter}
	\end{figure*}
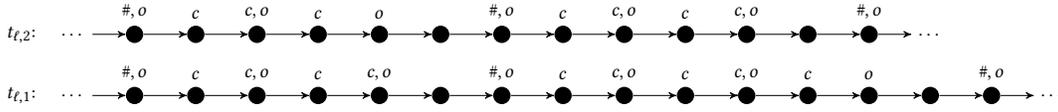%
	The alternating $o$-labels on $\KM_I$-states in combination with strict monotonicity of \tefs, allow for the definition of variants of $\Um$- $\F$-, $\Gm$- and $\X$-operators that are synchronously evaluated in our setting:	
	\begin{align*}
		[\varphi\Um_\sigma\psi]&\coloneqq[\dep(o)\land\varphi\Um_\exists\dep(o)\land\psi],\\
		\F_\sigma\varphi&\coloneqq [\top\Um_\sigma \varphi],\\
		\Gm_\sigma\varphi&\coloneqq\Gm_\exists(\dep(o)\land\varphi), \\
		\X_\sigma\varphi&\coloneqq \X_\exists(\dep(o)\land\varphi).
	\end{align*}

Initially, $\traces(\KM_I)$ contains all possible combinations and sequences of counter modifications. 
	Intuitively, we sort these traces into five groups; traces of types in $\mathbb T$ plus the dummy trace. 
	In the first step, we need to cut this tremendous number of traces down to four traces (plus the dummy trace).
	This is realised by the formula $\varphi_{I,b}$ that is a split into two subformulae (after we synchronously make one step):
\begin{align}\label{eq:Ib}
	\varphi_{I,b} &\coloneqq \X_\sigma(p_\top\lor(\varphi_{\text{struc}}\land\X_\sigma\X_\sigma\varphi_{\text{comp}})).
\end{align}
Note that the dummy trace $\pi_d$ has to be always split to the right side (as it has not $p_\top$ labelled). This prevents an empty split on that side. 
The vast majority of unconsidered traces are assigned to the left side of the split. 

We need to define an auxiliary formula 
\[\varphi_{\text{single}}(S)\coloneqq \Gm_\sigma \bigwedge_{p \in S} \dep(p)\] 
expressing, that a particular trace team agrees on each time step with respect to the set of propositions in $S$. 
Now, we turn to the right-hand side of the split in \eqref{eq:Ib} which is a conjunction of two formulae:
\[
\varphi_{\text{struc}}\!\!\coloneqq\!\!\bigwedge_{t\in\mathbb T}\X_\exists q_t\land \left[(t_\ell\land\varphi_{\text{single}}(S))\!\!\lor\!\!(t_r\land\varphi_{\text{single}}(S))\lor\lnot p_\top\right],
\]
where $S\coloneqq\{\#,c\}\cup \mathbb I$.
The first conjunct of this formula ensures that we are dealing with at least one trace of each type of $t\in\mathbb T$ (the trace $\pi_d$ and every different type $t'\neq t$ is `paused').
The second conjunct splits away $\pi_d$ and groups the traces according to the left and right counter. 
There, it forces that $t_{s,1}$ has the same labelling (with respect to $\#,c$ and instruction labels from $\mathbb I$) as $t_{s,2}$ for $s\in\{r,\ell\}$.
This together with the construction of $\KM_I$ ensures that we are dealing with exactly four traces.

The formula $\varphi_{\text{comp}}$ in \eqref{eq:Ib}, defined by
\[
\varphi_{\text{comp}}\coloneqq\lnot p_\top\lor\left([\Box\Um_\exists \blacksquare\land \theta_\text{valid}]\land\theta_\text{brec}\land i_1\right),
\] 
is used to desynchronise trace $t_{s,1}$ from $t_{s,2}$ (for $s\in\{\ell,r\}$) by exactly one $\#$-interval to enable enforcing the de/increment-operation later.
Again, we first split $\pi_d$ away and then say that the label $b$ is occurring infinitely often via $\theta_\text{brec}\coloneqq\Gm_\exists\F_\exists b$, then desynchronise with the $\Um$ntil operator and say that the computation is valid.
Note that it is crucial to use $\Um_\exists$ for desynchronising the traces (compare Figure~\ref{fig:KMI-N2C}: $t_{s,2}$ has to stay at the first $\#$, while $t_{s,1}$ advances to the second $\#$; so, this is handled via the $\Box$ together with the $\blacksquare$).
The $i_1$ at the end ensures that the first instruction of $I$ is executed first.

	In the following, we define the formulae required for implementing the instructions of the N2C machine; in this context, we write $s=\ell$ and $\bar s=r$ or vice versa, as short-hands for the left and right counter. 
	When we increment/decrement a counter, we need a formula that makes sure that on the traces for the other counter, the counter value stays the same (hence $t_{\bar s,1}$ and $t_{\bar s,2}$ have the same next $\#$-interval with respect to $c$):
	\[
		\text{halt}\coloneqq\X_\sigma[\dep(c)\land\lnot\#\Um_\sigma\#].
	\]
	Now, we can define the formula that states the incrementation of the counter $C_s$.
	Notice that the $c$ in the left part of the synchronous $\Um$ntil makes sure that the counter value differs by exactly one.
	The synchronous $\Um$ntil operator matches the number of $c$-labelled states on both $s$-traces and then verifies that the count for the first trace contains one more $c$:
	\[
		C_s\text-\text{inc}\coloneqq 
		\bigl(\X_\sigma[\,c\,\Um_\sigma
			         (p_{t_{s,2}}\land\lnot c)\lor(p_{t_{s,1}}\land c\land\X_\sigma \lnot c)]
		\bigr)
		\lor (\text{halt}\land t_{\bar s}).
	\]
	Symmetrically, we can define a decrement operation:
	\[
		C_s\text-\text{dec}\coloneqq 
		\bigl(\X_\sigma[\,c\,\Um_\sigma
			         (p_{t_{s,2}}\land c\land \X_\sigma \lnot c)\lor(p_{t_{s,1}}\land \lnot c)]
		\bigr)
		\lor (\text{halt}\land t_{\bar s}).
	\]	
	Now, we define the formulae for the possible instructions.
	\begin{description}
		\item[$i\colon C_s^+$ goto $\{j,j'\}$:] The following formula ensures, that we increase the counter $C_s$ and then reach the next $\#$, where either $j$ or $j'$ is uniformly true.
		\[\theta_i\coloneqq C_s\text-\text{inc}\land\X_\sigma([\lnot\#\Um_\exists \#\land (j\varovee j')]\]

		\item[$i\colon C_s^-$ goto $\{j,j'\}$:] As before, jump to the next $\#$ and have $j$ or $j'$ uniformly after decreasing the counter $C_s$.
		\[\theta_i\coloneqq C_s\text-\text{dec}\land\X_\sigma([\lnot\#\Um_\exists \#\land (j\varovee j')]\]
		
		\item[$i\colon$if $C_s=0$ goto $j$, else goto $j'$:] Here, we use the Bool\-ean disjunction for distinguishing the part of the if-then-else construct. 
		Recall that $t_{s,2}$ encodes the current counter value of $C_s$. 
		The halt-formula ensures that the counter values stay the same:
		\begin{align*}\theta_i\coloneqq\X_\sigma
		\biggl(
		\bigl(
			((&q_{t_{s,2}}\land \lnot c)\lor(\lnot q_{t_{s,2}}))
			\land
			[\lnot\#\Um_\exists \#\land j]
		\bigr)
		\varovee\\
		\bigl(
			((&q_{t_{s,2}}\land c)\lor(\lnot q_{t_{s,2}}))
			\land
			[\lnot\#\Um_\exists \#\land j']
		\bigr)
		\biggr)\land\text{halt}.
		\end{align*}
	\end{description}
	Next, we need a formula stating that only the label $j\in\mathbb I$ is true:
	\[
	\text{only}(j)\coloneqq j\land\bigwedge_{i\neq j}\lnot i.
	\]
	Now, let $\theta_{\text{valid}}$ state that each $\#$-labelled position encodes the correct instruction of the N2C machine. 
	Note that due to $\dep(\#)$ the encoding of the traces remains synchronised:
	\[
	\theta_\text{valid}\coloneqq \Gm_\exists\left(\dep(\#)\land\left(\left(\#\land \bigvee_{i<n}(\text{only}(i)\land\theta_i)\right)\lor\neg\#\right)\right).
	\] 	
 	The length of the formula $\varphi_{I,b}$ is linear in $|\mathbb I|$.

	The direction ``$\Leftarrow$'' of claim~(\ref{thm:undecidable_eq1}) follows by construction of $\varphi_{I,b}$ and $\KM_I$. 
	For the other direction, note that if $\varphi_{I,b}$ is satisfied then this means the label $b$ needs to appear infinitely often and the encoded computation of the N2C machine is valid.
\end{proof}

\section{Translating from \NoCaseChange{TeamCTL}\texorpdfstring{$^*$}{*} to AABA}\label{sec:translation}
In the previous section, we showed that model checking is highly undecidable even for a very restricted fragment of $\teamctls$.
This motivates the search for decidable restrictions.

\subsection{Asynchronous automata and tefs}
We first introduce Alternating Asynchronous B\"uchi Automata.
\begin{definition}[\cite{GutsfeldMO21}]
	Let $M = \{1, 2, \dots, n\}$ be a set of directions and $\Sigma$ an input alphabet. 
	An \emph{Alternating Asynchronous B\"uchi Automaton (AABA)}  is a tuple $\mathcal{A} = (Q,\rho^0,\rho,F)$ where $Q$ is a finite set of states, $\rho^0 \in \mathcal{B}^+(Q)$ is a positive Boolean combination of initial states, $\rho\colon Q \times \Sigma \times M \to \mathcal{B}^+(Q)$ maps triples of control locations, input symbols and directions to positive Boolean combinations of control locations, and $F \subseteq Q$ is a set of final states.
\end{definition}

For the definition of runs of AABA, we need a notion of trees.
Formally, a \emph{tree} $\mathfrak{T}$ is a subset of $\mathbb{N}^{*}$ such that for every node $\mathfrak{t} \in \mathbb{N}^{*}$ and every positive integer $n \in \mathbb{N}$: $\mathfrak{t} \cdot n \in \mathfrak{T}$ implies (i) $\mathfrak{t} \in \mathfrak{T}$ (we then call $\mathfrak{t} \cdot n$ a \emph{child} of $\mathfrak{t}$), and (ii) for every $0 < m < n$, $\mathfrak{t} \cdot m \in \mathfrak{T}$. 
We assume every node has at least one child.
A \emph{path} in a tree $\mathfrak{T}$ is a sequence of nodes $\mathfrak{t}_0 \mathfrak{t}_1 \dots$ such that $\mathfrak{t}_0 = \varepsilon$ and $\mathfrak{t}_{i+1}$ is a child of $\mathfrak{t}_i$ for all $i \in \mathbb{N}$.

\begin{definition}
	A \emph{run} of an AABA $\mathcal{A}$ over $n$ infinite words $w_1,\dots,w_n \in \Sigma^{\omega}$ is defined as a $Q$-labeled tree $(\mathfrak{T},r)$ where $r\colon \mathfrak{T} \to Q$ is a labelling function.
	Additionally, for each $\mathfrak{t} \in \mathfrak{T}$, we have $n$ offset counters $c_1^\mathfrak{t},\dots,c_n^\mathfrak{t}$ starting at $c_i^{\mathfrak{t}} = 0$ for all $i$ and $\mathfrak{t}$ with $|\mathfrak{t}| \leq 1$.
	Together, the labelling function and offset counters satisfy the following conditions: 
	\begin{enumerate}[(i)]
		\item we have $\{\,r(\mathfrak{t}) \mid \mathfrak{t} \in \mathfrak{T}, |\mathfrak{t}| = 1\,\} \models \rho^0$, and 
		\item when node $\mathfrak{t} \in \mathfrak{T} \setminus \{\varepsilon\}$ has children $\mathfrak{t}_1,\dots,\mathfrak{t}_k$, then there is a $d \in M$ such that 
		\begin{enumerate}[(a)]
			\item $c_d^{\mathfrak{t}_i} = c_d^\mathfrak{t} + 1$ and $c_{d'}^{\mathfrak{t}_i} = c_{d'}^{\mathfrak{t}}$ for all $i$ and $d' \neq d$, 
			\item we have $1 \leq k \leq |Q|$, and
			\item the valuation assigning true to $r(\mathfrak{t}_1),\dots,r(\mathfrak{t}_k)$ and false to all other states satisfies $\rho(r(\mathfrak{t}),w_d(c_d^\mathfrak{t}),d)$. 
		\end{enumerate}
	\end{enumerate}
	A run $(\mathfrak{T},r)$ is an \emph{accepting run} iff for every path $\mathfrak{t}_0 \mathfrak{t}_1 \dots$ in $(\mathfrak{T},r)$, a control location $q \in F$ occurs infinitely often.
\end{definition}

We say that $(w_1,\dots,w_n)$ is \emph{accepted} by $\mathcal{A}$ iff there is an accepting run of $\mathcal{A}$ on $w_1,\dots,w_n$.
The \emph{set of tuples of infinite words} accepted by $\mathcal{A}$ is denoted by $\mathcal{L}(\mathcal{A})$.

We also use AABA with a generalised B\"uchi acceptance condition which we call \emph{Generalised Alternating Asynchronous B\"uchi Automata (GAABA)}.
Formally, a GAABA is an AABA in which the set $F = \bigcup F_i$ consists of several acceptance sets $F_i$ and the B\"uchi acceptance condition is refined such that a run $(\mathfrak{T},r)$ is accepting iff every path visits a state in every set $F_i$ infinitely often. 
A GAABA can be translated to an AABA with quadratic blowup similar to the standard translation from Generalised B\"uchi Automata to B\"uchi Automata~\cite{DBLP:books/cu/Demri2016}. For this purpose, we first annotate states with the index $i$ of the next set $F_i$ that requires a visit of an accepting state. We then consecutively move to the states annotated with the next index if such an accepting state is seen. Only those states are declared accepting that indicates that an accepting state in every $F_i$ has been visited.

AABA and tefs model asynchronicity in a similar manner.
In fact, the progress of time in each path of a run in an AABA can be considered as a non-parallel tef over a finite number of traces.
Formally, let $(\mathfrak{T},r)$ be a run of an AABA over $n\in\N$ words. 
For every path $\pi \in (\mathfrak{T},r)$, we denote by $\tau_\pi$ the time evaluation function obtained by setting $\tau_\pi(i)\dfn(c_1^{\pi(i)},\dots, c_n^{\pi(i)})$.
Then, the set $I_{(\mathfrak{T},r)} \dfn \{\,\tau_\pi \mid \pi \text{ is a path in }(\mathfrak{T},r)\,\}$ is the set of \emph{induced \tefs} for a run $(\mathfrak{T},r)$.

We consider the language of AABA where only certain sets of induced \tefs are allowed.
Formally, for an AABA $\mathcal{A}$ and a set of \tefs $\mathit{TE}$, the \emph{language of $\mathcal{A}$ restricted to $\mathit{TE}$} is given by $\mathcal{L}_{\mathit{TE}}(\mathcal{A}) \dfn \{\,w = (w_1, \dots, w_n) \in (\Sigma^{\omega})^n \mid \exists (\mathfrak{T},r) \text{ s.t. } (\mathfrak{T},r) \text{ is an accepting run}$ $\text{ of }\mathcal{A} \text{ on } w \text{ and }  I_{(\mathfrak{T},r)} \subseteq \mathit{TE}\,\}$.
Then, the \emph{$\mathit{TE}$ emptiness problem for an AABA} $\mathcal{A}$ is to check whether $\mathcal{L}_{\mathit{TE}}(\mathcal{A})$ is empty.

In particular, we consider two restrictions on \tefs that were already considered in the literature, $k$-synchronous \tefs and $k$-context bounded \tefs \cite{GutsfeldMO21}.
Given a team $(T,\tau)$ and $k \in \mathbb{N}$, the \tef $\tau$ is \emph{$k$-synchronous} iff for all $i \in \mathbb{N}$ and $t, t' \in T$, $|\tau(i, t) - \tau(i, t')| \leq k$. 
The term $k$-synchronicity indicates that traces are not allowed to diverge by more than $k$ steps during the execution.
For the definition of the second restriction, let $\tau$ be a non-parallel \tef and let 
\[
\switch_{\tau}(i) =
\begin{cases}
1 & ,\text{ if $i>0$ and $\exists t\in T$ s.t. $\tau(i-1, t) = \tau(i, t)$ and} \\
&\text{\hphantom{, if $i>0$ and $\exists t\in T$ s.t. }} \tau(i, t) \neq \tau(i+1, t),\\
0 & ,\text{ otherwise}.
\end{cases}
\]
be a Boolean function that indicates whether a context switch is performed between steps $i$ and $i+1$ in $\tau$. 
A \tef $\tau$ is \textit{$k$-context-bounded} iff $\tau$ is non-parallel and $\switch_{\tau}(i) \leq 1$ for all $i \in \mathbb{N}$ and $\sum_{i \in \mathbb{N}}\switch_{\tau}(i) \leq k$.
The term $k$-context-boundedness states that only a single trace is allowed to progress on each global time step and we can only switch between different traces at most $k$ times.

For AABA, the following holds.
\begin{theorem}[{\cite[Cors.~3.6 \& 3.13, Thms.~3.12 \& 3.17]{GutsfeldMO21}}]\label{theorem:AABA}$\,$
	\begin{enumerate}
		\item The emptiness problem for AABA is undecidable.
		\item The $k$-synchronous emptiness problem for AABA with $n$ traces is $\mathrm{EXPSPACE}$-complete and is $\mathrm{PSPACE}$-complete for fixed $n$.
		\item The $k$-context-bounded emptiness problem for AABA is $(k-2)$-$\mathrm{EXPSPACE}$-complete.
	\end{enumerate}
\end{theorem}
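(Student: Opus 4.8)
The plan is to prove the three parts separately, in each case exploiting the key feature of an AABA: its $n$ reading heads advance asynchronously, so the relative offsets between the heads on the different input words can encode unbounded numerical information, and restricting how far these offsets may drift apart is exactly what restores decidability. For Part~1, I would reduce from the recurring computation problem for non-deterministic $2$-counter machines (Theorem~\ref{thm:countermachine}). The idea is to use two of the input words to carry the two counters: the current value of a counter is represented by the offset of one reading head relative to a marked reference position on its word, so that an increment or decrement corresponds to letting the associated head advance one extra (or one fewer) step than its partner before the next marker. Alternation is used to fork off checking branches that verify, at each simulated step, that the chosen instruction is applied faithfully (in particular, that a zero-test succeeds only when the relevant offset is $0$), while the main branch carries the computation forward. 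A Büchi acceptance set containing the states that simulate the recurring instruction $b$ then accepts precisely the runs corresponding to $b$-recurring computations. Since offsets are unbounded, the counters are genuinely unbounded, so undecidability (indeed $\Sigma^1_1$-hardness) follows.

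For Part~2, in the $k$-synchronous regime the offset vector of the reading heads always stays inside a window of width $k$, so only finitely much information about the relative positions is ever relevant. The plan is to compile a $k$-synchronous AABA into an ordinary (synchronous) alternating Büchi automaton whose states are pairs $(q,\delta)$ consisting of an original control location $q$ and an offset profile $\delta$ recording how far each head has advanced within the window; a single synchronous step guesses which heads move and updates $\delta$, rejecting whenever the $k$-synchronicity bound would be violated. There are $O\big((k+1)^n\big)$ profiles, so the resulting automaton has size polynomial in $|Q|$ and $k$ but exponential in $n$. Since emptiness of alternating Büchi automata is in \textsc{PSPACE} in the automaton size, this yields \textsc{PSPACE} for fixed $n$ and \textsc{EXPSPACE} in general. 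The lower bounds come by specialising: for fixed $n$, already $n=1$ turns an AABA into an alternating Büchi automaton, whose emptiness is \textsc{PSPACE}-hard; for unbounded $n$, I would encode an \textsc{EXPSPACE}-bounded Turing machine, using the $n$ words and their exponentially many offset profiles as an exponentially large addressing mechanism for the tape.

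For Part~3, the $k$-context-bounded restriction forces each run to decompose into at most $k+1$ phases, in each of which a single word advances while the others are frozen. The plan is to process these phases one at a time, summarising the behaviour of each phase as a finite relation on a bounded \emph{interface} of control locations, and then composing the summaries across context switches. Each such summarisation is essentially a projection/determinisation step for alternating automata and costs one exponential, so iterating it across the phases builds a tower of exponentials whose height grows linearly with $k$; tracking the base cases yields the $(k-2)$-\textsc{EXPSPACE} upper bound. The matching lower bound is obtained by a tower-of-counters encoding in which each additional context implements one further level of exponential counting, thereby simulating a $(k-2)$-\textsc{EXPSPACE} Turing machine.

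The main obstacle I expect is Part~3: both directions require precise control of the blowup so that the exponential tower has height exactly $k-2$ rather than $k$ or $k-1$, which means the phase-summarisation construction and the counter encoding in the lower bound must be engineered to agree on the off-by-constant in the tower height. By comparison, Parts~1 and~2 follow relatively standard counter-machine and window-tracking paradigms, and their principal subtlety is only in how the magnitude of $k$ (unary versus binary) is charged to the input for the fixed-$n$ bound.
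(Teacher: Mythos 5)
You should first be aware that the paper does not prove this theorem at all: it is imported from Gutsfeld et al.~\cite{GutsfeldMO21} (their Corollaries 3.6 and 3.13 and Theorems 3.12 and 3.17), and the only argument the paper itself supplies is a transfer remark --- the cited results are stated for AAPA (parity acceptance), and they apply to AABA because a B\"uchi condition is a special case of a parity condition (for the upper bounds) and because the hardness constructions ``rely only on reachability and not on a parity acceptance condition'' (for the lower bounds). Your from-scratch reconstruction is therefore a genuinely different route from the paper's, and your three sketches do follow the paradigms underlying the cited proofs: head-offset encodings of counter machines for undecidability, compilation into a synchronous alternating B\"uchi automaton tracking offset profiles for $k$-synchronicity, and phase-by-phase summarisation for context-boundedness.

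As a standalone proof, however, the proposal has genuine gaps. First, in Part 1 the zero-test is underspecified: an AABA never observes head offsets directly, only the letters under the heads, so testing ``offset $=0$'' requires the accepted words to carry verifiable markers and an alternation gadget (e.g., a branch that advances the two heads in lockstep and compares letters); without such a mechanism the reduction does not go through. Second, in Part 2 the states $(q,\delta)$ alone do not suffice: when a branch's head is out of step with the synchronous reading position, the automaton must still obtain the letter under that head, which requires either buffering the last $k$ letters of each word (this makes the automaton exponential in $k$ and ruins your fixed-$n$ $\mathrm{PSPACE}$ bound) or a guess-and-verify gadget spawning auxiliary branches that check guessed letters once the global position catches up; your sketch is silent on this, yet the size claim depends on it. Third, and most seriously, in Part 3 you concede that the phase-summarisation argument as sketched yields a tower of height roughly $k$ and that bringing it down to exactly $k-2$ must be ``engineered''; that engineering \emph{is} the content of the cited Theorem 3.17, so neither the $(k-2)$-$\mathrm{EXPSPACE}$ upper bound nor the matching lower bound is actually established by your argument. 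Finally, note that the economical proof --- and the one the paper intends --- is simply to cite the AAPA results and observe that they transfer to the B\"uchi subcase as described above.
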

Note that the hardness results in \cite{GutsfeldMO21} were formulated for AAPA (Alternating Asynchronous Parity Automata), not AABA.
However, the proofs rely only on reachability and not on a parity acceptance condition.
Thus, they carry over to AABA as well.

Similar to the restricted semantics for AABA, one can define restricted semantics for \teamctls.
For a team $(T,\tau)$, a $\teamctls$ formula $\varphi$ and a set of \tefs $\mathit{TE}$ with $\tau\in \mathit{TE}$, we write $(T,\tau,i) \models_{\mathit{TE}} \varphi$ (for all timesteps $i\in\N$) to denote that $(T,\tau,i)$ satisfies $\varphi$ under a semantics in which quantifiers $\exists$ and $\forall$ range only over \tefs in $\mathit{TE}$.
This is straightforwardly extended to modes of satisfaction $\models_{*,\mathit{TE}}$ for $* \in \{\exists,\forall\}$.
The \emph{fixed size $\mathit{TE}$ satisfiability problem $\fsSAT$} is then to decide for a given natural number $n$ and a $(\teamctls,\models_{*})$ formula $\varphi$ whether there exists a multiset of traces $T$ with $|T| = n$ such that $T \models_{*,\mathit{TE}} \varphi$. 
The \emph{fixed size $\mathit{TE}$ model checking problem $\fsMC$} is to decide for a Kripke model, a $(\teamctls,\models_{*})$ formula $\varphi$ and natural number $n$ whether all multisets of traces $T \subseteq \traces(\KM)$ of size $n$ satisfy $\varphi$ under $\mathit{TE}$ semantics. 
The \emph{$TE$ path checking problem $\PC$} is defined analogously.
We set $\mathcal{L}_{\mathit{TE}}^n(\varphi) \dfn \{\;T \mid |T| = n \land T \models_{*,\mathit{TE}} \varphi \,\}$ to be the set of finite multisets of traces of size $n$ satisfying $\varphi$ under the $\mathit{TE}$ semantics.

\subsection{Idea behind the translation}
We now present a translation of $\teamctls(\mathcal{S})$ to AABA over a fixed number $n\in\N$ of traces where $\mathcal{S}  \dfn \{\bor,\bneg, \NE, \flatop, \dep{}, \subseteq \}$ is the set of all additional atoms and connectives used in this paper. 
Note that Theorem~\ref{thm:GA} would allow us to drop `$\dep{}$' and `$\subseteq$' from the translation, but the cost would be an exponential blowup, which does not occur in our direct translation.
Our translation is obtained by first constructing a suitable Generalised Alternating Asynchronous B\"uchi Automaton (GAABA) and then translating that GAABA into an AABA as described above.

We start with the intuition of translating $\teamctls(\mathcal{S})$ formulae to GAABA.
We consider explicitly only existential quantification over \tefs and the $\models_{\exists}$ mode of satisfaction. 
Universal quantifiers as well as the other types of satisfaction are reduced to their existential counterparts by using Boolean negation.
This translation is based on the classical Fischer-Ladner construction, which translates LTL formulae into non-deterministic B\"uchi automata \cite{DBLP:books/cu/Demri2016}. 
We construct a GAABA $\mathcal{A}_{\psi}$ inductively over the quantification depth such that $\mathcal{A}_{\psi}$ has an accepting run over the input multiset of traces under $\mathit{TE}$ semantics if and only if there exists a \tef such that $\psi$ is satisfied by the input multiset of traces under $\mathit{TE}$ semantics.
This allows us to apply decidability results for AABA under restricted semantics, in particular those of \cite{GutsfeldMO21}, to decision problems for $\teamctls(\mathcal{S})$.

Let us explain this construction in turn. 
In general, we perform the standard Fischer-Ladner construction for LTL, i.e., we annotate states with the formulae that should hold whenever an accepting run visits the respective states. 
The transition function is then defined such that the requirements for the successor induced by these formulae are met. 
Finally, we choose accepting sets such that in an accepting run, $[\psi_1\Um\psi_2]$ formulae are either not required at a given time step or the run must eventually visit a state containing $\psi_2$. The indices associated with the formulae indicate the team traces we are considering for that formula.
There are three additional problems not covered by the standard construction: asynchronicity, quantifiers, and non-standard logical operators not present in LTL, i.e., the split operator and additional team semantics atoms.

In a GAABA, time evaluation functions with asynchronous progress on different traces can be modelled straightforwardly by asynchronous moves. 
In every step, we non-deterministically pick the traces to be advanced, ensure that the atomic propositions in the next state of $\mathcal{A}_{\varphi}$ match those given by the next index of that trace and maintain the atomic propositions for all other traces.
Without loss of generality, we assume that only one trace is advanced in each time step by a given \tef in the formal description of the construction and it will be clear how to extend the construction to general \tefs.

In order to handle existential quantifiers, we use conjunctive transitions.
For every existential subformula $\exists \psi$ contained in a state $q$, we guess an initial state of $\mathcal{A}_{\psi}$ which agrees with $q$ on the atomic propositions and then conjunctively proceed from that state according to the transition function of $\mathcal{A}_{\psi}$.
For universal quantifiers and the $\models_{\forall}$ mode of satisfaction, we proceed as follows.
Formulae $\forall \psi$ are rewritten using the well-known equivalence $\forall \psi \equiv \bneg \exists \bneg \psi$.
Then, Boolean negations of existentially quantified formulae $\bneg \exists \psi$ are handled similarly to existentially quantified formulae.
The main difference is that we transition to the complement of $\mathcal{A}_{\psi}$ instead of $\mathcal{A}_{\psi}$ itself when encountering a formula $\bneg \exists \psi$ in a state.

Finally, let us discuss the split operator and team semantics atoms.
For the former, we use indexed subformulae to keep track of which traces we are analysing with respect to each subformula.
In the initial states, we work with the full index set $\{1,\dots,n\}$ for the formula $\varphi$, and whenever we encounter a split $(\psi_1 \lor \psi_2)$, we non-deterministically partition the given index set into two sets and check each $\psi_i$ only against its respective index set.
Using this information, we can directly infer whether a dependence atom $\dep(\psi_1, \dots, \psi_n, \varphi)$ holds in that state by checking whether all traces agreeing on the formulae $\psi_1, \dots, \psi_n$ also agree on $\varphi$. Likewise, the atoms $\NE$, $\flatop$ and $\subseteq$ are handled by explicitly checking the semantics for each state.
Notice again that we only allow propositional formulae in dependence atoms, so the semantics of these atoms depend only on the current state.

\subsection{Formal translation}

We now describe the construction more formally.
As mentioned, we only treat existential quantifiers explicitly and use negation to handle universal quantifiers.
In order to achieve this goal, we rewrite each subformula $\forall \psi$ as $\bneg \exists \bneg \psi$.
Thus, we can transform every $\teamctls(\mathcal{S})$ formula into an equivalent formula without universal quantifiers and assume that formulae are in $\exteamctls(\mathcal{S})$ in the rest of this section.

Our translation is based on the notions of \textit{indexed Fischer-Ladner-closure}, \textit{consistent subsets of the indexed Fischer-Ladner-closure} and the \textit{local transition relation}.
We introduce these notions first.
For a natural number $n$ (this will correspond to the number of traces later), we set $[n] \coloneqq \{1, \dots, n\}$ and $I\coloneqq 2^{[n]}$. 
For an index set $M \in I$, let $\SP(M) = \{\,(M_1, M_2) \mid M = M_1 \uplus M_2\,\}$ be the set of possible splits. Notice that $M_1$ or $M_2$ can be the empty set here.
\begin{definition}
	Let $n\in\mathbb N$ and  $I\coloneqq 2^{[n]}$. 
	For an $\exteamctls(\mathcal{S})$ formula $\varphi$, let $\Sub(\varphi)$ be the set of subformulae of $\varphi$, closed under negation, in which double negations are cancelled and $\bneg\bneg \psi$ is therefore identified with $\psi$.
	The \emph{indexed Fischer-Ladner-closure} of $\varphi$ over $n$ traces is given by $\cl(\varphi) = \Sub(\varphi) \times I$.
\end{definition}
\begin{definition}
	Let $n\in\mathbb N$ and $\varphi$ be an $\exteamctls(\mathcal{S})$ formula, and $\cl(\varphi)$ be the indexed Fischer-Ladner-closure over $n$ traces. 
	A set $S \subseteq \cl(\varphi)$ is \emph{consistent} iff the following holds:
	\begin{enumerate}
		\item Let $\psi$ be a propositional formula such that $\psi,\lnot\psi$ occur in $\cl(\varphi)$. Then we have that 
		\begin{itemize}
			\item for all $i\in[n]$: $(\psi, \{i\}) \in S$ iff $(\neg \psi, \{i\}) \notin S$.
			\item for all $M\in I$: $(\psi, M) \in S$ iff $\forall j \in M: (\psi, \{j\}) \in S$. 
			\item for all $M\in I$: $(\lnot\psi, M) \in S$ iff $\forall j \in M: (\lnot\psi, \{j\}) \in S$. 
		\end{itemize}
		\item If $((\psi_1 \land \psi)_2, M) \in \cl(\varphi)$, then $((\psi_1 \land \psi_2), M) \in S$ iff $(\psi_1, M) \in S$ and $(\psi_2,M) \in S$.
		\item If $((\psi_1 \lor \psi)_2, M) \in \cl(\varphi)$, then $((\psi_1 \lor \psi_2), M) \in S$ iff $(\psi_1, M_1) \in S$ and $(\psi_2, M_2) \in S$ for some $(M_1, M_2) \in \SP(M)$.
		\item If $(\psi_1 \bor \psi_2, M) \in \cl(\varphi))$, then $(\psi_1 $$\bor$$ \psi_2, M) \in S$ iff $(\psi_1, M) \in S$ or $(\psi_2, M) \in S$.
		\item Let $\psi \in \cl(\varphi)$. Then $\psi \in S$ iff $\bneg \psi \not\in S$.
		\item For all $(\dep(\varphi_1, \dots, \varphi_k, \psi), M) \in \cl(\varphi)$ it holds that
		 $(\dep(\varphi_1, \dots,$ $\varphi_k, \psi), M) \in S$ iff $\forall i, j \in M:$
  \begin{multline*}
(\forall 1 \leq \ell \leq k: (\varphi_\ell, \{i\}) \in S \iff  (\varphi_\ell, \{j\}) \in S)\\
  \text{implies } ((\psi, \{i\}) \in S \iff (\psi, \{j\}) \in S).
    \end{multline*}
  
  \label{rule:dep}
		\item If $(\NE, M) \in \cl(\varphi)$, then $(\NE, M) \in S$ iff $M \neq \emptyset$.
		\item If $(\flatop \psi, M) \in \cl(\varphi)$, then $(\flatop \psi, M) \in S$ iff $\forall i \in M: (\psi, \{i\}) \in S$.
		\item For all $(\varphi_1 \cdots \varphi_k \subseteq \psi_1 \cdots \psi_k, M) \in \cl(\varphi)$ it holds that $(\varphi_1 \cdots \varphi_k \subseteq \psi_1 \cdots \psi_k, M) \in S$ iff for all $i \in M$ there exists $j \in M$ such that
  \begin{itemize}
  \item if $(\varphi_\ell, \{i\}) \in S$ then $(\psi_\ell, \{j\}) \in S$, and
  \item if $(\lnot\varphi_\ell, \{i\}) \in S$ then $(\lnot\psi_\ell, \{j\}) \in S$,
  \end{itemize}
   for $1 \leq \ell \leq k$
  \label{rule:subseteq}
	\end{enumerate}
	We denote the set of all consistent sets by $\Con(\varphi)$.
\end{definition}
\begin{definition}
	Let $\Sigma\coloneqq2^\ap$. 
	The \emph{local transition relation}
	$\xrightarrow{}$ is a maximal subset of $\Con(\varphi) \times [n]  \times \Sigma \times \Con(\varphi)$ such that for $S \xrightarrow{i,\sigma} S'$ (meaning $(S,i,\sigma,S')\in\,\to$) with $S, S' \in \Con(\varphi)$, $\sigma\in \Sigma$ and $i \in [n]$, the following holds: 
	\begin{itemize}
		\item For $j \neq i$ with $(\psi, \{j\}) \in S$: $(\psi, \{j\}) \in S'$, where $\psi\in\{\,p,\lnot p\mid p\in\ap\,\}$.
		\item For all $p \in \ap: (p, \{i\}) \in S' \iff p \in \sigma$.
		\item If $(\X\psi, M) \in \cl(\varphi)$, then $(\X\psi, M) \in S$ iff $(\psi, M) \in S'$.
		\item If $([\psi_1 \op \psi_2] , M) \in \cl(\varphi)$, then $([\psi_1 \op \psi_2] , M) \in S$ iff $(\psi_2,M) \in S$ or ($(\psi_1,M) \in S$ and $([\psi_1 \op \psi_2] , M) \in S'$).
	\end{itemize}
\end{definition}

This relation roughly denotes whether a state would be a suitable successor of another state according to the transition relation in the standard Fischer-Ladner construction when an input symbol is read on the $i$-th component.
Notice that this relation and the transition function we are about to define progress on a single trace in one step, while our \tefs can progress on multiple steps at once. This is without loss of generality as simultaneous progress on multiple traces can easily be simulated using consecutive transitions over intermediate states.

Using the definitions just described, we can construct a suitable AABA for a $\teamctls(\mathcal{S})$ formula $\varphi$. 
By treating sets of tuples as multisets in the obvious way, we then obtain the following Theorem.
\begin{theorem}\label{thm:aaba-embed}Let $\mathcal{S}=\{\bor,\bneg, \NE, \flatop, \dep{}, \subseteq \}$.
	\begin{enumerate}
		\item For every $(\teamctls(\mathcal{S}),\models_{*})$ formula $\varphi$ and natural number $n$, there is a GAABA $\mathcal{A}_{\varphi}$ such that $\mathcal{L}_{\mathit{TE}}(\mathcal{A}_{\varphi})$ is equivalent to $\mathcal{L}_{\mathit{TE}}^n(\varphi)$ for all sets of \tefs $\mathit{TE}$.
		\item For every $(\teamctls(\mathcal{S}),\models_{*})$ formula $\varphi$ and natural number $n$, there is an AABA $\mathcal{A}_{\varphi}$ such that $\mathcal{L}_{\mathit{TE}}(\mathcal{A}_{\varphi})$ is equivalent to $\mathcal{L}_{\mathit{TE}}^n(\varphi)$ for all 
		sets of \tefs $\mathit{TE}$.
		\item For all sets of \tefs $\mathit{TE}$, if it is decidable whether $\mathcal{L}_{\mathit{TE}}(\mathcal{A})$ is empty for every AABA $\mathcal{A}$, then the finite $\mathit{TE}$ satisfiability and model checking problems and the $\mathit{TE}$ path checking problem for $\teamctls(\mathcal{S})$ are decidable.
	\end{enumerate}
\end{theorem}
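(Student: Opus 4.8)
The plan is to prove the three items in order, with essentially all of the work going into item~(1); items~(2) and~(3) are then short reductions. For item~(1) I would construct the GAABA $\mathcal{A}_\varphi$ by induction on the quantification depth of $\varphi$, reusing the indexed Fischer--Ladner closure $\cl(\varphi)$, the consistent sets $\Con(\varphi)$, and the local transition relation $\xrightarrow{}$ defined above. The state space is $\Con(\varphi)$ together with a disjoint copy of the inductively obtained state space of $\mathcal{A}_\psi$ for every maximal existential subformula $\exists\psi$ of $\varphi$; the initial condition is the disjunction of all consistent sets containing $(\varphi,[n])$; and the generalised B\"uchi condition contains one acceptance set for each indexed until-formula $(\psi_1\U\psi_2,M)\in\cl(\varphi)$, collecting the states that either do not assert $(\psi_1\U\psi_2,M)$ or do assert $(\psi_2,M)$, thereby enforcing the eventuality. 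The transition function advances (w.l.o.g.) a single nondeterministically chosen trace $i$, passes to a $\xrightarrow{}$-successor, and for every asserted pair $(\exists\psi,M)$ spawns by conjunctive alternation a guessed initial state of $\mathcal{A}_\psi$ agreeing with the current state on the atomic propositions, after which it proceeds according to the transition function of $\mathcal{A}_\psi$. These are the details deferred to the appendix.

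The correctness statement proved by induction is the invariant that $\mathcal{A}_\psi$ has an accepting run on $(w_1,\dots,w_n)$ with $I_{(\mathfrak{T},r)}\subseteq\mathit{TE}$ if and only if there is a \tef $\tau\in\mathit{TE}$ with $(T,\tau,0)\models_{\mathit{TE}}\psi$, where $T$ is the multiset consisting of $w_1,\dots,w_n$. The base case is the quantifier-free fragment and reduces to a careful check that every semantic clause is mirrored by the corresponding consistency rule: the split $\lor$ is matched by the partition of the index set $M$ in rule~(3), and the team atoms $\dep$, $\subseteq$, $\NE$, and $\flatop$ by rules~(6)--(9). Crucially, these atoms are sound against a single state because their parameters are propositional, so their truth depends only on the current state, as already noted. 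The induction step for $\exists$ uses the conjunctive branches: guessing an initial state of $\mathcal{A}_\psi$ and proceeding in it realises the choice of a witnessing \tef, and the induction hypothesis closes the equivalence; the asynchronous offset counters read off the induced \tef, while the $\mathit{TE}$-restriction on runs ensures that this witness lies in $\mathit{TE}$.

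For the $\models_\forall$ mode and the full logic, universal quantification is reduced to the existential case by dualization, writing $\forall\psi$ as the Boolean negation of the existential automaton for the negation normal form of $\lnot\psi$; this is the Boolean negation referred to above and detailed in the appendix. I expect this to be the main obstacle: one must complement the alternating automaton (dualizing transitions and acceptance) in a way that is compatible both with the asynchronous offset-counter mechanism and with the side condition $I_{(\mathfrak{T},r)}\subseteq\mathit{TE}$, so that $\mathcal{L}_{\mathit{TE}}(\overline{\mathcal{A}})$ captures genuine universal satisfaction over \tefs in $\mathit{TE}$ rather than merely the set-theoretic complement of $\mathcal{L}_{\mathit{TE}}(\mathcal{A})$. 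Item~(2) then follows from item~(1) by the generalised-B\"uchi-to-B\"uchi translation described earlier in this section, at the cost of a quadratic blow-up.

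Finally, for item~(3) I would reduce each of the three decision problems to $\mathit{TE}$-emptiness. Fixed-size $\mathit{TE}$-satisfiability is, by item~(2) and the equivalence of $\mathcal{L}_{\mathit{TE}}(\mathcal{A}_\varphi)$ with $\mathcal{L}^n_{\mathit{TE}}(\varphi)$, exactly the nonemptiness of $\mathcal{L}_{\mathit{TE}}(\mathcal{A}_\varphi)$, which is decidable as the complement of $\mathit{TE}$-emptiness. For fixed-size $\mathit{TE}$-model checking one negates the question: using the full-logic construction, build an AABA for ``$T\not\models_{*,\mathit{TE}}\varphi$'', intersect it with an AABA accepting exactly the $n$-tuples of traces of $\KM$, and test emptiness; the instance is positive iff this language is empty. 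For $\mathit{TE}$-path checking with input a finite multiset of ultimately periodic traces, intersect $\mathcal{A}_\varphi$ with the recognisable singleton language of that fixed tuple and test emptiness, turning membership into an emptiness query. All three reductions are effective, so decidability of $\mathit{TE}$-emptiness for AABA yields decidability of all three problems; combined with Theorem~\ref{theorem:AABA} this delivers the concrete decidable restrictions for the $k$-synchronous and $k$-context-bounded semantics.
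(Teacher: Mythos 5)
Your proposal matches the paper's proof essentially step for step: the same induction on quantification depth over the indexed Fischer--Ladner construction (states $\Con(\varphi)$ plus the inductively obtained $Q_\psi$, initial condition as the disjunction of consistent sets containing $(\varphi,[n])$, one generalised B\"uchi set per indexed until-formula, conjunctive alternation spawning guessed initial states of $\mathcal{A}_\psi$ for asserted pairs $(\exists\psi,M)$), the same quadratic GAABA-to-AABA reduction for item~(2), and the same reductions of the three decision problems in item~(3) to $\mathit{TE}$-emptiness (satisfiability directly, path checking by product with the input team, model checking by negating the formula and taking the product with the automaton for $n$-tuples of traces of $\KM$). The one nuance is the treatment of universal quantification: where you propose to complement the alternating automaton outright --- and rightly flag the interaction of dualization with B\"uchi acceptance and the side condition $I_{(\mathfrak{T},r)}\subseteq\mathit{TE}$ as the main obstacle --- the paper instead realises the negation syntactically, rewriting $\forall\psi$ as $\neg\exists BN(\psi)$ and pushing the Boolean negation $BN$ inwards via dual connectives and dual team atoms (e.g.\ an ``all subteams'' dual of the split) that are then absorbed into extra consistency rules, so that at the level of the decision problems only emptiness tests on $\mathcal{A}_{BN(\varphi)}$ are needed and the automaton-complementation difficulty you anticipate is largely sidestepped.
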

\begin{proof}
Since $\teamctls(\mathcal{S})$ formulae can be transformed into $\exteamctls(\mathcal{S})$ formulae as described at the start of this subsection, we assume $\varphi$ to be given as an $\exteamctls(\mathcal{S})$ formula.
We prove the claim by induction on the quantifier depth of $\varphi$ for the following induction hypothesis:
There is a GAABA $\mathcal{A}_{\psi} = (Q_{\psi}, \rho_{\psi}^0, F_{\psi},\Sigma, \rho_{\psi})$ for all maximal quantified subformulae $\psi = \exists \psi'$ and negated quantified subformulae $\psi = \bneg\exists \psi'$ of $\varphi$ such that $(w_1, \dots, w_n) \in \mathcal{L}_{\mathit{TE}}(\mathcal{A}_{\psi})$ iff there is a \tef $\tau \in  \mathit{TE}$ (resp. there is no \tef $\tau \in  \mathit{TE}$) with $(\{w_1,\dots,w_n\}, \tau, 0) \models_{\mathit{TE}} \psi'$.
Here, the automaton for negated quantifiers is obtained by complementing the automaton for non-negated quantifiers.

We denote by $\mathrm{Ext}(q)$ the set of pairs $(\psi, M)$ where $\psi$ is an existentially quantified or negated existentially quantified formula and $M$ is an index set.
Further, we use
$\mathrm{Succs}(q, \sigma, i) = \{\,q' \in Q \mid q \xrightarrow{\sigma,i} q' \,\}$
to denote the possible successor states according to the standard Fischer-Ladner construction.
We then set
\[
Q_{\varphi} = Con(\varphi) \cup \bigcup_{\psi \in \mathrm{Ext}(\varphi)} Q_{\psi},
\]
\[
\rho_{\psi}^0 = \bigvee_{q \in Q_{\varphi} with (\varphi, [n]) \in q} q
\]
and $\Sigma = 2^{AP}$.
For states $q \in Q_{\psi}$ for some subformula $\psi$, we set $\rho_{\varphi}(q, \sigma, i) = \rho_{\psi}(q, \sigma, i)$.
For all other states $q$, we distinguish two cases.
If $\mathrm{Ext}(q) = \emptyset$,
then $\rho_{\varphi}(q, \sigma, i) = \bigvee_{q' \in \mathrm{Succs}(q, \sigma, i)}q'$.
Otherwise, we set $\rho_{\varphi}(q, \sigma, i)\dfn$
\[ 
\bigvee_{q' \in \mathrm{Succs}(q, \sigma, i)}\bigwedge_{(\psi,M) \in \mathrm{Ext}(q)}\bigvee_{Q'' \in \mathrm{Extsuccs}(q, \sigma, i, \psi, M)} \bigwedge_{q'' \in Q''} (q' \land q''),
\] 
where 
\begin{multline*}
\mathrm{Extsuccs}(q, \sigma, i, \psi, M) \dfn \{\,Q' \mid \exists q' \in Q_{\psi}^0:\\
(\psi, M) \in q' \land \forall i \in [n]\forall ap \in AP:(ap,i) \in q \iff\\ (ap,i) \in q' \land Q' \models \rho_{\psi}(q', \sigma, i)  \,\}.	
\end{multline*}

Finally, we set $\mathrm{Finset}([\psi_1 \Um \psi_2], M) \dfn \{\,q \in Q \mid ([\psi_1 \Um \psi_2], M) \notin q \lor (\psi_2, M) \in q\,\}$, $F_{\varphi} \dfn \{\,\mathrm{Finset}([\psi_1 \Um \psi_2], M) \mid ([\psi_1 \Um \psi_2], M) \in cl(\varphi)\,\}$ if $cl(\varphi)$ contains at least one indexed $\U$-formula and $F_{\varphi} = Q_{\varphi}$ otherwise.

This translation allows us to define semantic restrictions generically over \tefs and apply them uniformly to both GAABA and $\teamctls(\mathcal S)$.
We now describe how this can be used to transfer the decidability results of \cite{GutsfeldMO21} for restricted semantics of GAABA to $\teamctls(\mathcal S)$ in order to obtain decidability for the path checking problem and our finite variants of the satisfiability and model checking problems.
If $\mathcal{L}_{\mathit{TE}}(\mathcal{A}_{\varphi})$ can be tested for emptiness, then the finite $\mathit{TE}$ satisfiability problem can be decided directly by this emptiness test.
Analogously, the path-checking problem can be solved by intersecting $\mathcal{L}_{\mathit{TE}}(\mathcal{A}_{\varphi})$ and the input team which can of course be represented by a GAABA too. 
Finally, the finite $\mathit{TE}$ model checking problem for teams of size $n$ can be solved by building $\mathcal{A}_{BN(\varphi)}$ and testing the product with the GAABA accepting all teams of size $n$ over the input Kripke model.
\end{proof}

We note that the undecidability proof for $\teamctl($$\varovee$$)$ model checking of Theorem~\ref{thm:undecidable} relies on the use of infinite teams. 
This directs us to consider teams of fixed size in order to recover decidability.
Finally, applying Theorem~\ref{theorem:AABA} to Theorem~\ref{thm:aaba-embed}, and utilising Theorem \ref{thm:GA}, we obtain the following. Here $\mathrm{ALL}$ denotes the set of all generalised atoms.
\begin{corollary}\label{cor:teamctls-k-synch-k-bounded} Let $\mathit{TE}$ be the set of $k$-synchronous or $k$-context-bounded \tefs and $\mathcal{S}=\{\bor,\bneg, \NE, \flatop, \dep{}, \subseteq \}$.
	\begin{enumerate}
		\item The problems $\fsSAT,\fsMC,$ and $\PC$ for the logics $\teamctls(\mathcal{S},\mathrm{ALL})$ are decidable.
		\item For a fixed formula $\varphi$ and fixed $n, k \in \mathbb{N}$, the fixed size $k$-synchronous or $k$-context-bounded model checking over $n$ traces for $\teamctls(\mathcal{S})$ is decidable in polynomial time. 
	\end{enumerate}
\end{corollary}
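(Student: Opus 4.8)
The plan is to read the corollary off Theorems~\ref{thm:GA}, \ref{theorem:AABA}, and~\ref{thm:aaba-embed}, the only genuine work being a complexity refinement for the second item. For item~(1) I would first discharge the generalised atoms. By Theorem~\ref{thm:GA} every generalised atom, hence every member of $\mathrm{ALL}$, is already expressible in $\teamltl(\bor,\NE,\flatop)$, which is a sublogic of $\teamctls(\mathcal{S})$ because $\mathcal{S}\supseteq\{\bor,\NE,\flatop\}$. Replacing each occurrence of a generalised atom by its defining formula turns a $\teamctls(\mathcal{S},\mathrm{ALL})$ formula into an equivalent $\teamctls(\mathcal{S})$ formula. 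Since these atoms are local, inspecting the traces only at the current global index, the translation neither shifts \tefs nor constrains them, so the equivalence holds under $\models_{*,\mathit{TE}}$ for every $\mathit{TE}$. Consequently $\fsSAT$, $\fsMC$, and $\PC$ for $\teamctls(\mathcal{S},\mathrm{ALL})$ reduce to the same problems for $\teamctls(\mathcal{S})$. Instantiating $\mathit{TE}$ to the $k$-synchronous, resp.\ $k$-context-bounded, \tefs, Theorem~\ref{theorem:AABA}(2) and~(3) tell us that the $\mathit{TE}$ emptiness problem for AABA is decidable in both cases; Theorem~\ref{thm:aaba-embed}(3) then yields decidability of all three problems for $\teamctls(\mathcal{S})$, and thus for $\teamctls(\mathcal{S},\mathrm{ALL})$, settling item~(1).

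For item~(2) the only input is the Kripke structure $\KM$, while $\varphi$, $n$, and $k$ are fixed. I would take the AABA $\mathcal{A}_{\varphi}$ of Theorem~\ref{thm:aaba-embed}(2) (complemented as appropriate so that a witness for nonemptiness corresponds to a size-$n$ team violating $\varphi$); because $\varphi$ and $n$ are fixed, $\mathcal{A}_{\varphi}$ has constant size. Model checking against $\KM$ then reduces to a $\mathit{TE}$-restricted emptiness check of the product of $\mathcal{A}_{\varphi}$ with $n$ copies of $\KM$, one per direction. The key observation is that the chosen restrictions keep the extra bookkeeping bounded: under $k$-synchronicity the $n$ local clocks differ by at most $k$, and under $k$-context-boundedness only one trace advances per step with at most $k$ switches. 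Hence, for fixed $n$ and $k$, a configuration of the product is determined by the $n$ current states of $\KM$ together with constantly much offset/switch data and the constant-size control of $\mathcal{A}_{\varphi}$, so the product has only $\mathrm{poly}(|\KM|)$ reachable configurations.

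The step I expect to be the main obstacle is turning this polynomial state bound into a polynomial-time algorithm, i.e.\ verifying that the exponential and tower blowups of Theorem~\ref{theorem:AABA} are charged to $\mathcal{A}_{\varphi}$ alone. The point is that removing the alternation of the constant-size $\mathcal{A}_{\varphi}$, and, in the context-bounded case, performing the at most $k$ complementations triggered by context switches, costs only a constant factor, after which $\KM$ enters exclusively through a nondeterministic Büchi product of size $\mathrm{poly}(|\KM|)$. Nonemptiness of such a product, i.e.\ existence of a reachable accepting lasso, is decidable in time polynomial in $|\KM|$, giving the claimed bound. I would therefore spend the bulk of the argument confirming that the blowups in Theorem~\ref{theorem:AABA}, which are stated in terms of the full automaton size, collapse to a constant factor once $\mathcal{A}_{\varphi}$ is fixed and $|\KM|$ is the sole growing parameter.
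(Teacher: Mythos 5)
Your proposal is correct and takes essentially the same route as the paper: item~(1) by eliminating generalised atoms via Theorem~\ref{thm:GA} and then combining Theorem~\ref{thm:aaba-embed}(3) with the decidable restricted emptiness results of Theorem~\ref{theorem:AABA}, and item~(2) by noting that for fixed $\varphi$, $n$, $k$ the automaton $\mathcal{A}_{\varphi}$ (Boolean-negated for model checking) has constant size, so all blowups are charged to it and only a $\mathrm{poly}(|\KM|)$ product emptiness test remains. The paper makes your final step concrete by citing the translations of \cite[Thm.~3.11, Cor.~3.16]{GutsfeldMO21} to obtain an equivalent constant-size B\"uchi automaton before forming the product, which is precisely the alternation-removal argument you sketch.
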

The last item follows from the fact that for fixed parameters, the automata $\mathcal{A}_{\varphi}$ are of constant size. The translations described in \cite[Theorem 3.11, Corollary 3.16]{GutsfeldMO21} can be used to obtain equivalent B\"uchi automata of constant size. The emptiness test on the product with the automaton accepting the input traces is then possible in polynomial time.

\begin{table}[t]
	\centering
	\begin{tabular}{p{4.1cm}c}\toprule
	Model Checking Problem for  & Complexity \& Reference\\\midrule
		$\exteamltl($$\varovee$$,\subseteq)$ & $\Sigma^0_1$-hard \hfill{\footnotesize (Cor.~\ref{thm:synchltl-embed} \& \cite[Thm.~2]{DBLP:journals/corr/abs-2010-03311})}\\
		$\ateamltl($$\varovee$$,\subseteq,\NE)$ & $\Sigma^0_1$-hard \hfill{\footnotesize (Cor.~\ref{thm:synchltl-embed} \& \cite[Thm.~2]{DBLP:journals/corr/abs-2010-03311})}\\
		$\ateamctl($$\varovee$$,\subseteq)$ & $\Sigma^0_1$-hard \hfill{\footnotesize (Cor.~\ref{thm:synchltl-embed} \& \cite[Thm.~2]{DBLP:journals/corr/abs-2010-03311})}\\
		$\exteamctl($$\varovee$$)$ & $\Sigma^1_1$-hard \hfill{\footnotesize (Thm.~\ref{thm:undecidable})}\\
		$\teamctls(\mathcal{S},\mathrm{ALL})$ for $k$-syn\-chronous or $k$-context-bounded tefs & decidable \hfill{\footnotesize (Thm.~\ref{thm:GA}, Cor.~\ref{cor:teamctls-k-synch-k-bounded})}\\
		$\teamctls(\mathcal{S})$ for $k$-synchro\-nous or $k$-context-bounded tefs, where $k$ and the number of traces is fixed & polynomial time \hfill{\footnotesize (Cor.~\ref{cor:teamctls-k-synch-k-bounded})}\\\bottomrule
	\end{tabular}
	\caption{Complexity results overview. The $\Sigma^0_1$-hardness
	results follow via embeddings of synchronous $\teamltl$, whereas
	the $\Sigma^1_1$-hardness truly relies on asynchronity.
	$\mathrm{ALL}$ is the set of all generalised atoms and $\mathcal{S}=\{\bor,\bneg, \NE, \flatop, \dep{}, \subseteq \}$.}\label{tbl:overview}
\end{table}

\section{Conclusion} 
In this paper, we revisited temporal team semantics and introduced quantification over \tefs in order to obtain fine-grained control of asynchronous progress over different traces.
We discussed required properties for \tefs in depth and showed that, unlike previous asynchronous hyperlogics, variants of our logic can express some of these properties (such as synchronicity and fairness) themselves. 
Table~\ref{tbl:overview} summarises the complexity results for the model checking problem.  
We have shown that the model checking problem is highly undecidable already for $\exteamctl$ with Boolean disjunctions.
However, we also showed that $\teamctls(\bor,\bneg, \NE, \flatop, \dep{}, \subseteq)$ can be translated to AABA over fixed-size teams, providing a general approach to defining restricted asynchronous semantics and obtaining decidability for the path checking problem as well as the finite model checking and satisfiability problems.

A possible direction for future work would be to study further restricted classes of \tefs beyond $k$-synchronicity and $k$-context-boundedness for which the corresponding emptiness problem for AABA is decidable, since this would lead not only to new decidable semantics for our logic but also for other logics such as H$_{\mu}$ \cite{GutsfeldMO21}.
Additionally, it would be interesting to look into different techniques to solve variants of the model checking and satisfiability problems in which the number of traces is not fixed.

\begin{acks}
The first two authors were partially funded by the DFG grant MU 1508/3-1. 
The third author was partially funded by the DFG grants ME 4279/1-2 and ME 4279/3-1. 
The fourth author was partially funded by the DFG grant VI 1045/1-1.
\end{acks}

\balance
\bibliographystyle{ACM-Reference-Format}
\bibliography{biblio,main}

\end{document}